\documentclass{article} 
\usepackage{iclr2025_conference,times}
\usepackage[algo2e,ruled,vlined]{algorithm2e}
\usepackage{graphicx}
\usepackage{wrapfig}

\newcommand\norm[1]{\lVert#1\rVert}

\SetCommentSty{mycommfont}

\makeatletter
\DeclareRobustCommand\onedot{\futurelet\@let@token\@onedot}
\def\@onedot{\ifx\@let@token.\else.\null\fi\xspace}

\def\eg{\emph{e.g}\onedot} 
\def\ie{\emph{i.e}\onedot} 
 
\def\etc{\emph{etc}\onedot}

\makeatother



\usepackage{amsmath,amsfonts,bm}









\def\eqref#1{equation~\ref{#1}}









\def\1{\bm{1}}










\DeclareMathAlphabet{\mathsfit}{\encodingdefault}{\sfdefault}{m}{sl}
\SetMathAlphabet{\mathsfit}{bold}{\encodingdefault}{\sfdefault}{bx}{n}













\DeclareMathOperator*{\argmin}{arg\,min}

\usepackage{hyperref}
\usepackage{cleveref}
\Crefname{figure}{Figure}{Figures}
\Crefname{equation}{Equation}{Equations}
\Crefname{table}{Table}{Tables}
\Crefname{algorithm}{Algorithm}{Algorithms}
\Crefname{section}{Section}{Sections}


\usepackage{url}
\usepackage{booktabs}
\usepackage{xcolor}
\usepackage{soul}
\usepackage{mathtools}
\usepackage{amssymb}
\usepackage{multirow}
\usepackage{subcaption}

\usepackage{amsmath}

\usepackage{enumitem}
\setlist[itemize]{noitemsep,leftmargin=*,topsep=0em}
\setlist[enumerate]{noitemsep,leftmargin=*,topsep=0em}
\newcommand\normbig[1]{\left\lVert#1\right\rVert}
\usepackage{amsthm}

\newtheorem{proposition}{Proposition}
\newtheorem{example}{Example}

\usepackage{thmtools}
\usepackage{thm-restate}

\DeclareMathOperator\supp{supp}

\title{Posterior-Mean Rectified Flow: Towards Minimum MSE Photo-Realistic Image Restoration}


\author{Guy Ohayon,~~Tomer Michaeli,~~Michael Elad \\
Technion -- Israel Institute of Technology\\
\texttt{\{ohayonguy@cs,tomer.m@ee,elad@cs\}.technion.ac.il}
}

%

\iclrfinalcopy 
\begin{document}

\maketitle
\begin{abstract}
Photo-realistic image restoration algorithms are typically evaluated by distortion measures (\eg, PSNR, SSIM) and by perceptual quality measures (\eg, FID, NIQE), where the desire is to attain the lowest possible distortion without compromising on perceptual quality.
To achieve this goal, current methods commonly attempt to sample from the posterior distribution, or to optimize a weighted sum of a distortion loss (\eg, MSE) and a perceptual quality loss (\eg, GAN).
Unlike previous works, this paper is concerned specifically with the \emph{optimal} estimator that minimizes the MSE under a constraint of perfect perceptual index, namely where the distribution of the reconstructed images is equal to that of the ground-truth ones.
A recent theoretical result shows that such an estimator can be constructed by optimally transporting the posterior mean prediction (MMSE estimate) to the distribution of the ground-truth images.
Inspired by this result, we introduce Posterior-Mean Rectified Flow (PMRF), a simple yet highly effective algorithm that approximates this optimal estimator.
In particular, PMRF first predicts the posterior mean, and then transports the result to a high-quality image using a rectified flow model that approximates the desired optimal transport map.
We investigate the theoretical utility of PMRF and demonstrate that it consistently outperforms previous methods on a variety of image restoration tasks.
\end{abstract}
\section{Introduction}
\label{section:intro}
\begin{wrapfigure}{r}{0.45\textwidth}
    \centering
    \includegraphics[width=1\linewidth]{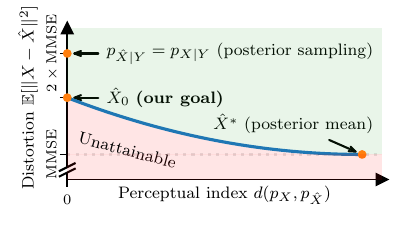}
    \caption{Illustration of the distortion-perception tradeoff, where distortion is measured by MSE. Many photo-realistic image restoration methods aim for posterior sampling. Theoretically, this approach achieves a perfect perceptual index ($p_{\hat{X}}=p_{X}$) but its MSE is twice the MMSE. In contrast, we aim for the estimator $\smash{\hat{X}_{0}}$ that \emph{minimizes the MSE} under a perfect perceptual index constraint (Eq.~(\ref{eq:dmax})), which typically achieves a \emph{smaller} MSE than posterior sampling.}
    \label{fig:teaser}
    \vspace{-1em}
\end{wrapfigure}
Photo-realistic image restoration (PIR) is the task of reconstructing visually appealing images from degraded measurements (\eg, noisy, blurry).
This is a long-standing research problem with diverse applications in mobile photography, surveillance, remote sensing, medical imaging, and more.
PIR algorithms are commonly evaluated by distortion measures (\eg, PSNR, SSIM~\citep{ssim}, LPIPS~\citep{zhang2018perceptual}), which quantify some type of discrepancy between the reconstructed images and the ground-truth ones, and by perceptual quality measures (\eg, FID~\citep{fid}, KID~\citep{kid}, NIQE~\citep{niqe}, NIMA~\citep{nima}), which are intended to predict the extent to which the reconstructions would look natural to human observers.
Since distortion and perceptual quality are typically at odds with each other~\citep{Blau_2018_CVPR}, the core challenge in PIR is to achieve minimal distortion \emph{without} sacrificing perceptual quality.
\begin{figure}[t]
    \centering
\includegraphics[width=1\linewidth]{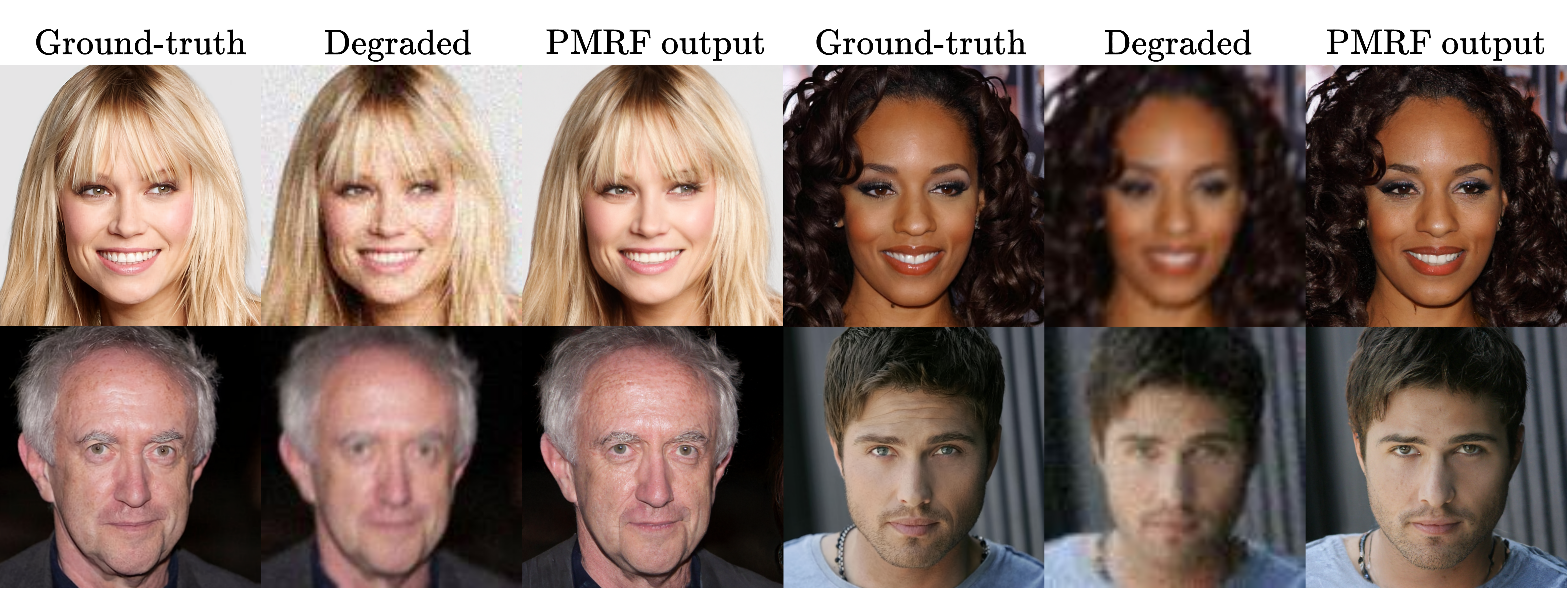}
    \caption{Visual results of PMRF (our method) on the \textbf{CelebA-Test} blind face image restoration data set. Our algorithm produces sharp and visually appealing details while maintaining incredibly low distortion according to a variety of measures \emph{simultaneously}. See~\Cref{table:celeba-test-blind}.}
    \label{fig:qualitative-celeba-test}
\end{figure}

A common way to approach this task is through posterior sampling~\citep{diffusion_survey,kawar-posterior,kawar2022denoising,snips,sean-jpeg,ohayon-posterior,bendel2023a,dps,wang2022zero,pigdm,diffpir,sr3,gibbsddrm,palette}. Specifically, letting $X$ and $Y$ denote the random vectors corresponding to the ground-truth image and its degraded measurement, respectively, posterior sampling generates a reconstruction $\smash{\hat{X}}$ by sampling from $\smash{p_{X|Y}}$ (such that $\smash{p_{\hat{X}|Y}=p_{X|Y}}$). 
This solution is appealing as it theoretically guarantees a perfect \emph{perceptual index}\footnote{Formally, the perceptual index of $\smash{\hat{X}}$ is defined as the statistical divergence between $p_{\hat{X}}$ and $p_{X}$.} ($\smash{p_{\hat{X}}=p_{X}}$). Interestingly, however, the Mean Squared Error (MSE) that this solution achieves is not the minimal possible under the perfect perceptual index constraint. Indeed, the MSE achieved by posterior sampling is precisely twice the Minimum MSE (MMSE) that can be achieved without a constraint on the perceptual index~\citep{Blau_2018_CVPR}. This is while the minimal MSE achievable under a perfect perceptual index constraint is typically strictly smaller~\citep{Blau_2018_CVPR,dror}, as illustrated in~\Cref{fig:teaser}.
We denote by $\smash{\hat{X}_{0}}$ the estimator that minimizes the MSE under a perfect perceptual index constraint. Its formal definition is provided in~\Cref{section:optimal-estimators}.

\begin{table}[t]
\caption{Quantitative evaluation of state-of-the-art blind face image restoration algorithms on the \textbf{CelebA-Test} benchmark. {\color[HTML]{FF0000} Red}, {\color[HTML]{0042FF} blue} and {\color[HTML]{009901} green} indicate the best, the second best, and the third best scores, respectively. Our method achieves the best FID, KID, PSNR and SSIM, and the second or third best scores in the rest of the perceptual quality and distortion measures.
A visual comparison is provided in~\Cref{fig:qualitative-celeba-test} and~\Cref{fig:celeba_qualitative} in the appendix.}
\begin{center}
\setlength{\tabcolsep}{3pt}
\begin{tabular}{@{}ccccccccccc@{}}
 &
  \multicolumn{4}{c}{\textbf{Perceptual Quality}} &
   &
  \multicolumn{5}{c}{\textbf{Distortion}} \\ \cmidrule(l){2-11} 
\textbf{Method} &
  FID$\downarrow$ &
  KID$\downarrow$ &
  NIQE$\downarrow$ &
  Precision$\uparrow$ & 
   &
  PSNR$\uparrow$ &
  SSIM$\uparrow$ &
  LPIPS$\downarrow$ &
  Deg$\downarrow$ &
  LMD$\downarrow$ \\ \midrule
DOT &
  100.2 &
  0.0914 &
  6.462 &
  0.1600 &
   &
  21.32 &
  0.6636 &
  0.4756 &
  43.87 &
  2.876 \\ \midrule
RestoreFormer++ &
  {\color[HTML]{009901} 41.15} &
  {\color[HTML]{009901} 0.0290} &
  4.187 &
  0.6877 &
   &
  25.31 &
  0.6703 &
  {\color[HTML]{0042FF} 0.3441} &
  {\color[HTML]{FF0000} 29.63} &
  2.043 \\
RestoreFormer &
  42.30 &
  0.0301 &
  4.405 &
  {\color[HTML]{009901} 0.7010} &
   &
  24.62 &
  0.6460 &
  0.3655 &
  32.13 &
  2.299 \\
CodeFormer &
  53.16 &
  0.0425 &
  4.649 &
  0.6940 &
   &
  25.15 &
  0.6700 &
  {\color[HTML]{FF0000} 0.3432} &
  37.28 &
  2.470 \\
VQFRv1 &
  41.79 &
  0.0297 &
  {\color[HTML]{FF0000} 3.693} &
  0.6593 &
   &
  24.07 &
  0.6446 &
  0.3515 &
  35.75 &
  2.429 \\
VQFRv2 &
  46.77 &
  0.0346 &
  {\color[HTML]{009901} 4.169} &
  0.6590 &
   &
  23.23 &
  0.6412 &
  0.3624 &
  44.38 &
  3.053 \\
GFPGAN &
  46.72 &
  0.0350 &
  4.415 &
  0.6970 &
   &
  24.99 &
  {\color[HTML]{009901} 0.6774} &
  0.3643 &
  36.05 &
  2.443 \\ \midrule
DiffBIR &
  59.06 &
  0.0509 &
  6.084 &
  0.5643 &
   &
  {\color[HTML]{009901} 25.39} &
  0.6536 &
  0.3878 &
  32.94 &
  {\color[HTML]{0042FF} 2.006} \\
DifFace &
  {\color[HTML]{0042FF} 38.43} &
  {\color[HTML]{0042FF} 0.0258} &
  4.288 &
  {\color[HTML]{FF0000} 0.7413} &
   &
  24.80 &
  {\color[HTML]{000000} 0.6726} &
  0.3999 &
  45.79 &
  2.965 \\
BFRffusion &
  41.53 &
  0.0301 &
  4.966 &
  0.6623 &
   &
  {\color[HTML]{0042FF} 26.21} &
  {\color[HTML]{0042FF} 0.6917} &
  0.3619 &
  {\color[HTML]{009901} 30.98} &
  {\color[HTML]{FF0000} 1.992} \\ \midrule
\textbf{PMRF (Ours)} &
  {\color[HTML]{FF0000} 37.46} &
  {\color[HTML]{FF0000} 0.0257} &
  {\color[HTML]{0042FF} 4.118} &
  {\color[HTML]{0042FF} 0.7073} &
   &
  {\color[HTML]{FF0000} 26.37} &
  {\color[HTML]{FF0000} 0.7073} &
  {\color[HTML]{009901} 0.3470} &
  {\color[HTML]{0042FF} 30.67} &
  {\color[HTML]{009901} 2.030} \\ \bottomrule
\end{tabular}
\end{center}
\label{table:celeba-test-blind}
\end{table}

Another common way to solve PIR tasks is to train a model by minimizing a weighted sum of a distortion loss (\eg, MSE) and a GAN loss~\citep{gan,vqfr,wang2021gfpgan,wang2023restoreformer++,wang2022restoreformer,zhou2022codeformer,Yang2021GPEN,srgan,wang2018esrgan,zhang2021designing,wang2021realesrgan}.
As explained by~\citet{Blau_2018_CVPR}, this is a principled way to traverse the distortion-perception tradeoff, where the GAN loss coefficient acts as a Lagrange multiplier that controls the desired perceptual index.
Thus, in principle, one can approximate $\smash{\hat{X}_{0}}$ by selecting a sufficiently large such coefficient.
Despite the elegance of this approach, diffusion methods that aim for posterior sampling tend to perform better in practice, both in terms of distortion and perceptual quality (see~\Cref{table:celeba-test-blind}), implying that current GAN-based methods fail to approximate $\smash{\hat{X}_{0}}$.
Such a shortcoming can be partially attributed to the fact that GANs are extremely difficult to optimize, especially when the GAN loss coefficient is significantly larger than that of the distortion loss.

In this paper, we propose \emph{Posterior-Mean Rectified Flow} (PMRF), a straightforward framework to \emph{directly} approximate $\smash{\hat{X}_{0}}$.
Interestingly, \citet{dror} proved that $\smash{\hat{X}_{0}}$ can be constructed by first predicting the posterior mean $\smash{\hat{X}^{*}\coloneqq\mathbb{E}[X|Y]}$, and then optimally transporting the result to the ground-truth image distribution (see~\Cref{section:optimal-estimators} for a formal explanation).
Motivated by this result, PMRF first approximates the posterior mean by using a model that minimizes the MSE between the reconstructed outputs and the ground-truth images.
Then, we train a rectified flow model~\citep{liu2023flow} to predict the direction of the straight path between corresponding pairs of posterior mean predictions and ground-truth images.
Given a degraded measurement at test time, PMRF solves an ODE using such a flow model, with the posterior mean prediction set as the initial condition.
As we explain in~\Cref{section:method}, PMRF approximates the desired estimator $\smash{\hat{X}_{0}}$, aiming for a solution that minimizes the MSE under a perfect perceptual index constraint.

Our paper is organized as follows.
In~\Cref{section:background} we provide the necessary background and set mathematical notations.
In~\Cref{section:method} we describe our proposed method, and provide intuition via theoretical results and a toy example with closed-form solutions.
In~\Cref{section:related-work} we discuss related work.
In~\Cref{section:experiments} we demonstrate the utility of PMRF on a variety of image restoration tasks, including denoising, super-resolution, inpainting, colorization, and blind restoration.
We show that PMRF sets a new state-of-the-art on several benchmarks in the challenging blind face image restoration task, and is either on-par or outperforms previous frameworks in the rest of the tasks.
Finally, in~\Cref{section:discussion} we conclude our work and discuss its limitations.

\section{Background}
\label{section:background}
We adopt the Bayesian perspective for solving inverse problems~\citep{davison_2003,kaipio_2005}, where a natural image $\smash{x}$ is regarded as a realization of a random vector $X$ with probability density function $p_{X}$.
The degraded measurement $\smash{y}$ (\eg, a noisy or low-resolution image) is a realization of a random vector $Y$, which is related to $X$ via the conditional probability density function $\smash{p_{Y|X}}$.
Given a degraded measurement $y$, an image restoration algorithm generates a prediction $\smash{\hat{x}}$ by sampling from $\smash{p_{\hat{X}|Y}}(\cdot|y)$, such that $\smash{\hat{X}}$ adheres to the Markov chain $\smash{X\rightarrow Y\rightarrow\hat{X}}$ (\ie $X$ and $\smash{\hat{X}}$ are statistically independent given $Y$).

\subsection{Distortion and perceptual index}\label{section:perception-distortion-tradeoff}
Image restoration algorithms are typically evaluated by their average distortion $\smash{\mathbb{E}[\Delta(X,\hat{X})]}$, where $\Delta(x,\hat{x})$ is some distortion measure that quantifies the discrepancy between $x$ and $\hat{x}$, and the expectation is taken over the joint distribution $\smash{p_{X,\hat{X}}}$.
Common examples for $\Delta(x,\hat{x})$ are the absolute error $\smash{\norm{x-\hat{x}}_{1}}$, the squared error $\smash{\norm{x-\hat{x}}^{2}}$, and LPIPS~\citep{zhang2018perceptual}.
Moreover, as the goal in PIR is to produce reconstructions that would look natural to humans, PIR algorithms are also evaluated by perceptual quality measures.
The ideal way to evaluate perceptual quality is to assess the ability of humans to distinguish between samples of ground-truth images and samples of reconstructed ones. This is typically done by conducting experiments where human observers vote on whether the generated images are real or fake~\citep{pix2pix2017,zhang2016colorful,NIPS2016_8a3363ab,NIPS2015_aa169b49,Dahl_2017_ICCV,IizukaSIGGRAPH2016,zhang2017real,DBLP:conf/bmvc/GuadarramaDBS0017}.
However, such experiments are too costly and impractical for optimizing models.
A practical and sensible alternative to quantify the perceptual quality is via some \emph{perceptual index} $\smash{d(p_{X},p_{\hat{X}})}$, where $\smash{d(\cdot,\cdot)}$ is a statistical divergence between probability distributions (\eg, Kullback–Leibler, Wasserstein)~\citep{Blau_2018_CVPR}.
Quantifying the perceptual index for high-dimensional distributions is both statistically and computationally intractable, so it is common to resort to approximations. Popular examples include the Fréchet Inception Distance (FID)~\citep{fid} and the Kernel Inception Distance (KID)~\citep{kid}.
\subsection{Optimal estimators for the squared error distortion}\label{section:optimal-estimators}
Due to the distortion-perception tradeoff~\citep{Blau_2018_CVPR}, it has become common practice to compare image restoration algorithms on the distortion-perception plane, where the goal is to obtain \emph{optimal} estimators with the lowest possible distortion given a prescribed level of perceptual index.
This goal can be formalized by the distortion-perception function~\citep{Blau_2018_CVPR},
\begin{equation}
    D(P)=\min_{p_{\hat{X}|Y}}\mathbb{E}[\Delta(X,\hat{X})]\quad \text{s.t.}\quad d(p_{X}, p_{\hat{X}})\leq P.\label{eq:perception-distortion}
\end{equation}
Perhaps the most common points of interest on $D(P)$ are $D(\infty)$ and $D(0)$, where the first point corresponds to the estimator achieving minimal average distortion under no constraint, and the second corresponds to the estimator achieving minimal average distortion under a perfect perceptual index constraint.
Considering the squared error distortion, these points are defined by
\begin{align}
    &\min_{p_{\hat{X}|Y}}\mathbb{E}[\norm{X-\hat{X}}^{2}] ~~\mbox{and}\label{eq:mmse} \\
    &\min_{p_{\hat{X}|Y}}\mathbb{E}[\norm{X-\hat{X}}^{2}]\quad \text{s.t.}\quad p_{\hat{X}}=p_{X},\label{eq:dmax}
\end{align}
respectively.
It is well-known that the unique solution to Problem~(\ref{eq:mmse}) is the posterior mean $\smash{\hat{X}^{*}\coloneqq\mathbb{E}[X|Y]}$, which typically produces overly-smooth reconstructions~\citep{Blau_2018_CVPR}.
Therefore, in PIR tasks, it is more appropriate to aim for the solution to Problem~(\ref{eq:dmax}). 
Interestingly,~\citet{dror} proved that a solution to Problem~(\ref{eq:dmax}) can be obtained by solving the optimal transport problem
\begin{align}
p_{U,V}\in\argmin_{p_{U',V'}\in\Pi(p_{X},p_{\hat{X}^{*}})}\mathbb{E}[\norm{U'-V'}^{2}],\label{eq:ot}
\end{align}
where $\smash{\Pi(p_{X},p_{\hat{X}^{*}})\coloneqq\{p_{U',V'}\: :\:p_{U'}=p_{X},p_{V'}=p_{\hat{X}^{*}}\}}$ is the set of all joint probabilities  $p_{U',V'}$ with marginals $\smash{p_{U'}=p_{X}}$ and 
$\smash{p_{V'}=p_{\hat{X}^{*}}}$.
Namely, the optimal solution to Problem~(\ref{eq:dmax}) can be constructed as follows: Given a degraded measurement $y$, first predict the posterior mean $\smash{\hat{x}^{*}=\mathbb{E}[X|Y=y]}$, and then sample from $p_{U|V}(\cdot|\hat{x}^{*})$, which is the optimal transport plan from $\smash{p_{\hat{X}^{*}}}$ to $p_{X}$.
Similarly to~\cite{dror}, we denote such a solution to~Problem~(\ref{eq:dmax}) by $\smash{\hat{X}_{0}}$.

As discussed before, one of the most common and appealing solutions for PIR tasks is the estimator $\smash{\hat{X}}$ that samples from the posterior distribution $p_{X|Y}$, such that $\smash{p_{\hat{X}|Y}=p_{X|Y}}$.
While such an estimator always attains a perfect perceptual index~\citep{Blau_2018_CVPR}, its MSE is typically \emph{larger} than that of $\smash{\hat{X}_{0}}$~\citep{Blau_2018_CVPR,dror} (see~\Cref{fig:teaser}).
In other words, to design an algorithm with minimal MSE under a perfect perceptual index constraint, one should often \emph{not} resort to posterior sampling, but rather to solving Problem~(\ref{eq:dmax}). This is our goal in this paper. 
Lastly, one may wonder whether sampling from $\smash{p_{X|\hat{X}^{*}}}$ instead of using the optimal transport plan from~\Cref{eq:ot} may also be effective in terms of MSE.
However, in~\Cref{appendix:conditioning-on-xstar} we prove that such an approach leads to precisely the same MSE as sampling from the posterior.

\subsection{Flow matching and rectified flows}
\paragraph{Flow matching.} Flow matching algorithms~\citep{liu2023flow,lipman2023flow,albergo2023building} are generative models defined via the ODE
\begin{align}
    dZ_{t}=v(Z_{t},t)dt,\label{eq:ode}
\end{align}
where $v$ is often called a \emph{vector field}, and $Z_{t}$ is some forward process such that $p_{Z_{0}}$ is the source distribution, from which we can easily sample (\eg, isotropic Gaussian noise), and $p_{Z_{1}}$ is the target distribution from which we aim to sample (\eg, natural images).
In principle, one can generate samples from the target distribution $p_{Z_{1}}$ by solving~\Cref{eq:ode}, where samples from the source distribution $p_{Z_{0}}$ are set as the initial conditions for the ODE solver.
Nevertheless, given a particular forward process $Z_{t}$, there are possibly many different vector fields that satisfy~\Cref{eq:ode}.
The goal in flow matching is to somehow find an appropriate vector field with desirable practical and theoretical properties, \eg, where the solution to~\Cref{eq:ode} is unique.

\paragraph{Rectified flow.} Rectified flow~\citep{liu2023flow} is a flow matching algorithm defined via the particular forward process
\begin{align}
    Z_{t}=t Z_{1} + (1-t)Z_{0},
\end{align}
which connects samples from $p_{Z_{1}}$ and $p_{Z_{0}}$ with straight lines. Here, $Z_{0}$ and $Z_{1}$ can be statistically independent, as is typically the case when learning a flow model from Gaussian noise to image data, but they can also have any joint distribution $p_{Z_0,Z_1}$. 
This forward process clearly adheres to the ODE $dZ_{t}=(Z_{1}-Z_{0})dt$, where $\smash{Z_{1}-Z_{0}}$ is the corresponding vector field.
However, this is not a practical generative model, since it requires knowing the ``destination'' realization of $Z_{1}$ at any time step $t<1$ (\ie, the solution is not causal).
To solve this issue,~\citet{liu2023flow} offer instead to use
\begin{align}
    v_{\text{RF}}(Z_{t},t)=\mathbb{E}[Z_{1}-Z_{0}|Z_{t}],\label{eq:rectified-flow}
\end{align}
which is a causal vector field that generates the target distribution, given that the solution to~\Cref{eq:ode} exists and is unique when adopting such a vector field (Theorem 3.3 in~\citep{liu2023flow}).
Interestingly, solving the ODE in~\Cref{eq:ode} with $\smash{v_{\text{RF}}}$ often approximates the optimal transport map from the source distribution to the target one, especially when the process is repeated several times (\ie, reflow) or when $p_{Z_{1},Z_{0}}$ is \emph{close} to the optimal transport plan between $p_{Z_{0}}$ and $p_{Z_{1}}$~\citep{liu2023flow,tong2024improving}.
To learn $\smash{v_{\text{RF}}}$, one can simply train a model $v_{\theta}$ by minimizing the loss
\begin{align}
    \int_{0}^{1}\mathbb{E}\left[\norm{(Z_{1}-Z_{0})-v_{\theta}(Z_{t},t)}^{2}\right]dt,
\end{align}
where the expectation is taken over the joint distribution $p_{Z_{1},Z_{0}}$~\citep{liu2023flow}.

\section{Posterior-Mean Rectified Flow}
\label{section:method}

\begin{algorithm2e*}[t]
\SetKwInput{KwInput}{Inputs}
\DontPrintSemicolon

  \SetKwProg{Fn}{Training}{}{}
    \Fn{}{
    \textit{Stage 1:} Solve $\omega^{*} \gets \arg\min_{\omega} \mathbb{E}\left[\norm{X-f_{\omega}(Y)}^{2}\right]$

    \textit{Stage 2:}  Solve $\theta^{*}\gets\argmin_{\theta}\mathbb{E}\left[\norm{\left(X-Z_{0}\right)-v_{\theta}(Z_{t},t)}^{2}\right]$\tcp*{$Z_{t}\coloneqq tX+(1-t)(f_{\omega^{*}}(Y)+\sigma_{s}\epsilon)$, where $t$ is sampled from $U[0,1]$.}

  }
  
  \SetKwProg{Fn}{Inference (using Euler's method with $K$ steps to solve the ODE)}{}{}
    \Fn{}{
    Sample $\epsilon\sim\mathcal{N}(0,I)$
    
    $\hat{x}\gets f_{\omega^{*}}(y)+\sigma_{s}\epsilon$\tcp*{$y$ is the given degraded measurement}
    
    \For{$i\gets 0,\hdots,K-1$}{
          $\hat{x}\gets \hat{x}+\frac{1}{K}v_{\theta^{*}}(\hat{x},\frac{i}{K})$
          }
    Return $\hat{x}$
  }
\caption{Posterior-Mean Rectified Flow (PMRF)}
\label{algorithm:PMRF}
\end{algorithm2e*}

We now describe our proposed algorithm, which we coin Posterior-Mean Rectified Flow (PMRF) (\Cref{algorithm:PMRF}).
Our method consists of two simple training stages.
First, we train a model $f_{\omega}$ to predict the posterior mean by minimizing the MSE loss,
\begin{align}
  \omega^{*}=\argmin_{\omega}\mathbb{E}\left[\norm{X -f_{\omega}(Y)}^{2}\right].\label{eq:mse-loss}
\end{align}
Note that this training stage can often be skipped, whenever there exists an off-the-shelf algorithm that attains sufficiently small MSE (high PSNR) in the desired restoration task.
In the second stage, we train a rectified flow model $v_{\theta}$ (a vector field) to solve
\begin{align}
    \theta^{*}=\argmin_{\theta}\int_{0}^{1}\mathbb{E}\left[\norm{\left(X-Z_{0}\right)-v_{\theta}(Z_{t},t)}^{2}\right]dt,\label{eq:loss-vector-field}
\end{align}
where $\smash{Z_{t}\coloneqq t X+(1-t)Z_{0}}$. Here, $\smash{Z_{0}\coloneqq f_{\omega^{*}}(Y)+\sigma_{s}\epsilon}$, where $\epsilon\sim\mathcal{N}(0,I)$ is statistically independent of $Y$ and $X$, and $\sigma_{s}$ is a hyper-parameter that controls the level of the Gaussian noise added to the posterior mean prediction.
As shown by~\citet{albergo2023stochasticinterpolantsdatadependentcouplings}, adding such a noise is critical when the source and target distributions lie on low and high dimensional manifolds, respectively.
Specifically, it alleviates the singularities resulting from learning a deterministic mapping between such distributions.
Note, however, that adding noise to $f_{\omega^{*}}(Y)$ may harm the MSE of the reconstructions produced by PMRF, and so $\sigma_{s}$ should be taken to be sufficiently small.

To explain why PMRF approximates the desired estimator $\hat{X}_{0}$, we prove an important proposition and demonstrate it on a simple example with closed-form solutions.
Specifically, let
\begin{align}
d\hat{Z}_{t}=v_{\text{RF}}(\hat{Z}_{t},t)dt\quad\mbox{with}\quad\hat{Z}_{0}=Z_{0}\label{eq:pmrf-gen}
\end{align}
be the ODE in PMRF, where $\smash{v_{\text{RF}}(z,t)=\mathbb{E}[X-Z_{0}|Z_{t}=z]}$ and $\hat{Z}_{t}$ is the random vector generated by PMRF at time step $t\in[0,1]$.
In~\Cref{appendix:theorem1-proof} we prove the following:
\begin{restatable}{proposition}{thmone}
\label{theorem:1}
Suppose that $\sigma_{s}=0$, and let us assume that the solution of the ODE in~\Cref{eq:pmrf-gen} exists and is unique.
Then,
\begin{enumerate}[label=(\alph*)]
    \item $\smash{\hat{Z}_{1}}$ attains a perfect perceptual index ($\smash{p_{\hat{Z}_{1}}=p_{X}}$).
    \item The MSE of $\hat{Z}_{1}$ cannot be larger than that of the posterior sampler.
    \item If the distribution of $(X-\hat{X}^{*})|Z_{t}=z_{t}$ is non-degenerate for almost every $z_{t}\in\supp{p_{Z_{t}}}$ and $t\in[0,1]$, then the MSE of $\hat{Z}_{1}$ is strictly smaller than that of the posterior sampler.
\end{enumerate}
\end{restatable}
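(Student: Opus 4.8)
The plan is to exploit two structural facts about the rectified flow in~\Cref{eq:pmrf-gen}: that it preserves the time marginals of the interpolation $Z_t = tX + (1-t)\hat{X}^{*}$, and that it can only decrease the squared transport cost of the coupling $(Z_0,Z_1)=(\hat{X}^{*},X)$. I would first record the consequences of setting $\sigma_s=0$: then $Z_0=f_{\omega^{*}}(Y)=\hat{X}^{*}$ (the unique minimizer of~\Cref{eq:mmse}) and $Z_1=X$, so $v_{\text{RF}}(z,t)=\mathbb{E}[X-\hat{X}^{*}\mid Z_t=z]$ and, crucially, $\hat{Z}_1=\Phi(\hat{Z}_0)=\Phi(\hat{X}^{*})$ is a \emph{deterministic} function of $Y$, where $\Phi$ is the (assumed well-defined) ODE flow map. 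For part (a) I would then invoke the marginal-preservation property of rectified flow (Theorem 3.3 in~\citep{liu2023flow}): under the assumed existence and uniqueness, $p_{\hat{Z}_t}=p_{Z_t}$ for every $t$, since both laws solve the same continuity equation with identical initial law $p_{Z_0}$. Taking $t=1$ gives $p_{\hat{Z}_1}=p_{Z_1}=p_X$.

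For part (b) the first step is an orthogonality decomposition. Because $\hat{Z}_1$ is deterministic given $Y$ while $\mathbb{E}[X-\hat{X}^{*}\mid Y]=0$, the cross term vanishes and, using $\hat{Z}_0=\hat{X}^{*}$,
\begin{align}
\mathbb{E}[\norm{X-\hat{Z}_1}^{2}]=\mathbb{E}[\norm{X-\hat{X}^{*}}^{2}]+\mathbb{E}[\norm{\hat{X}^{*}-\hat{Z}_1}^{2}]=\text{MMSE}+\mathbb{E}[\norm{\hat{Z}_1-\hat{Z}_0}^{2}].\nonumber
\end{align}
The second step bounds the transport cost. Writing $\hat{Z}_1-\hat{Z}_0=\int_0^1 v_{\text{RF}}(\hat{Z}_t,t)\,dt$, applying Jensen's inequality in $t$, then the marginal identity $p_{\hat{Z}_t}=p_{Z_t}$, and finally conditional Jensen yields
\begin{align}
\mathbb{E}[\norm{\hat{Z}_1-\hat{Z}_0}^{2}]\leq\int_0^1\mathbb{E}\big[\norm{\mathbb{E}[X-\hat{X}^{*}\mid Z_t]}^{2}\big]\,dt\leq\mathbb{E}[\norm{X-\hat{X}^{*}}^{2}]=\text{MMSE}.\nonumber
\end{align}
Combining the two displays gives $\mathbb{E}[\norm{X-\hat{Z}_1}^{2}]\leq 2\,\text{MMSE}$, which is exactly the MSE of the posterior sampler~\citep{Blau_2018_CVPR}.

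For part (c) I would simply track when the second inequality above becomes strict. For each fixed $t$, conditional Jensen gives the pointwise bound $\norm{\mathbb{E}[X-\hat{X}^{*}\mid Z_t]}^{2}\leq\mathbb{E}[\norm{X-\hat{X}^{*}}^{2}\mid Z_t]$, whose gap equals the conditional variance of $X-\hat{X}^{*}$ given $Z_t$; this gap is strictly positive precisely when $(X-\hat{X}^{*})\mid Z_t$ is non-degenerate. Under the stated hypothesis this holds for almost every $z_t\in\supp p_{Z_t}$ and almost every $t$, so taking expectations and integrating in $t$ produces the strict inequality $\mathbb{E}[\norm{\hat{Z}_1-\hat{Z}_0}^{2}]<\text{MMSE}$, and hence $\mathbb{E}[\norm{X-\hat{Z}_1}^{2}]<2\,\text{MMSE}$.

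The main obstacle is the rigorous justification of the marginal-preservation identity $p_{\hat{Z}_t}=p_{Z_t}$ together with the interchange of expectation and time-integration (and the representation $\hat{Z}_1-\hat{Z}_0=\int_0^1 v_{\text{RF}}(\hat{Z}_t,t)\,dt$), all of which rely on regularity of $v_{\text{RF}}$ and of the flow map $\Phi$ that the proposition guarantees only implicitly through the existence-and-uniqueness assumption. Once these analytic points are granted, the remainder is careful bookkeeping of the orthogonality principle and the two nested Jensen inequalities.
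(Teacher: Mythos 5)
Your proposal is correct and follows essentially the same route as the paper: part (a) via the marginal-preservation result (Theorem 3.3 of \citep{liu2023flow}), part (b) via the orthogonality decomposition combined with the integral representation of $\hat{Z}_{1}-\hat{Z}_{0}$ and two applications of Jensen's inequality, and part (c) by noting that the conditional Jensen step becomes strict under non-degeneracy. The only cosmetic difference is that you justify the orthogonality decomposition directly (the cross term vanishes since $\hat{Z}_{1}$ is $Y$-measurable and $\mathbb{E}[X-\hat{X}^{*}\mid Y]=0$) where the paper cites Lemma 2 of \citep{dror}.
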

Note that the assumption in $\textit{(a)}$ and $\textit{(b)}$ is the same as the one in~\citep{liu2023flow}, so it is not more limiting.
Whether the additional assumption in $\textit{(c)}$ holds depends on the nature of the restoration task.
For example, if $X$ can be restored from $Y$ with zero error (\ie, $p_{X|Y}(\cdot|y)$ is a Dirac delta function for almost every $y$), then $\smash{X-\hat{X}^{*}=0}$ almost surely and the assumption in $\textit{(c)}$ does not hold.
Yet, this is not an interesting setting as the degradation is not invertible in most practical scenarios.
To gain intuition into a more common scenario, consider the following example:
\begin{example}\label{example:1}
    Let $Y=X+N$, where $X\sim\mathcal{N}(0,1)$ and $N\sim\mathcal{N}(0,\sigma_{N}^{2})$ are statistically independent and $\sigma_N>0$. 
    Then, the MSE of $\hat{X}_{0}$ is strictly smaller than that of the posterior sampler.
    Moreover, when $\sigma_{s}=0$, all the assumptions in~\Cref{theorem:1} hold, and we have $\hat{Z}_{1}=\hat{X}_{0}$ almost surely.
\end{example}
See~\Cref{appendix:example-proof} for the proof of~\Cref{example:1}.
This example shows that PMRF not only outperforms posterior sampling, but may even \emph{coincide} with the desired estimator $\smash{\hat{X}_{0}}$ in certain cases.

\section{Related work}
\label{section:related-work}

Before moving on to demonstrate the effectiveness of our approach, it is instructive to note the difference between our PMRF method and existing techniques that may superficially seem similar.

\paragraph{Diffusion and flow-based posterior samplers.}
Diffusion or flow-based image restoration algorithms often attempt to sample from the posterior distribution by training a \emph{conditional} model that takes $Y$ (or some function of $Y$, like $\smash{\hat{X}^{*}}$) as an additional input~\citep{Zhu_2024_CVPR,2023diffbir}.
Some works avoid training a conditional model for each task separately, and rather modify the sampling process of a trained unconditional diffusion model~\citep{kawar2022denoising,dps}. In~\Cref{section:ablation} we perform a controlled experiment on various inverse problems, which shows that our PMRF method consistently outperforms posterior samplers with the same architecture.

\paragraph{Flow from degraded image.}
Some diffusion/flow models are trained on corresponding pairs of ground-truth images and degraded measurements~\citep{albergo2023stochasticinterpolantsdatadependentcouplings,delbracio2023inversion,li2023bbdm}.
In this approach, the idea is to obtain a high-quality image by solving an ODE/SDE with the \emph{degraded measurement} set as the initial condition.
For example,~\citet{albergo2023stochasticinterpolantsdatadependentcouplings} trained a rectified flow model for the forward process $\smash{Z_{t}=tX+(1-t)Y^{\dagger}}$, where $Y^{\dagger}$ is an up-sampled version of $Y$ such that it matches the dimensionality of $X$.
These algorithms are closely related to PMRF, in the sense that they learn to transport an \emph{intermediate} signal (instead of pure noise) to the ground-truth image distribution.
Yet, they have two critical disadvantages compared to PMRF.
First, the flow model's design is not agnostic to the type of degradation, as the degraded signals can have varying dimensionalities or lie in a different domain than that of the ground-truth images (\eg, in MRI image reconstruction).
Thus, the task of the flow model may be harder than necessary, as it needs to \emph{translate} signals from one domain to another.
On the other hand, in PMRF the flow model always operates in the image domain, where the dimensionalities of the source and target signals are the same.
Second, the \emph{theoretical} motivation for flowing from $Y$ is not clear, at least from a reconstruction performance standpoint (\eg, distortion).
In contrast, the theoretical motivation underlying PMRF is clear: it approximates $\smash{\hat{X}_{0}}$, which achieves the minimal possible MSE under the constraint of perfect perceptual index. 
As we show in~\Cref{section:ablation}, PMRF always either outperforms or is on-par with the solution that flows from $Y$ (see~\Cref{fig:fig_ours_vs_posterior}).

\paragraph{Methods that aim for $\smash{\hat{X}_{0}}$ directly.} To the best of our knowledge, Deep Optimal Transport (DOT)~\citep{dot} is the only existing method that, like PMRF, attempts to approximate $\smash{\hat{X}_{0}}$ directly.
Specifically, DOT approximates the desired optimal transport map (\Cref{eq:ot}) via a linear transformation in the latent space of a variational auto-encoder (VAE)~\citep{vae}.
This transformation is computed in closed-form using the empirical means and covariances (in latent space) of the source distribution (that of the posterior mean predictions) and the target distribution (that of the ground-truth images), under the assumption that both are Gaussian. This method is computationally efficient, but the use of a VAE imposes a performance ceiling.
Moreover, the optimal transport in DOT occurs in latent space and assumes that the source and target distributions are Gaussians, unlike~\Cref{eq:ot} which occurs in pixel space and does not make such an assumption. In contrast, PMRF does not use a VAE, and approximates the optimal transport directly in pixel space.
In~\Cref{section:experiments} we show that PMRF significantly outperforms DOT (see~\Cref{fig:fig_ours_vs_posterior}).
\section{Experiments}
\label{section:experiments}
\subsection{Blind face image restoration}\label{section:blind-face-restoration}
We train PMRF to solve the challenging blind face image restoration task, and compare its performance with leading methods.
As in previous works (\eg,~\citep{wang2021gfpgan}), we use the FFHQ data set~\citep{stylegan} with images of size $512\times 512$ to train our model.
Similarly to previous works, we adopt a complex and random degradation process to synthesize the degraded images,
\begin{align}
        Y=\left[(X\circledast k_{\sigma})\downarrow_{R}+N_{\delta}\right]_{\text{JPEG}_{Q}},\label{eq:real-world-deg}
\end{align}
where $\circledast$ denotes convolution, $k_{\sigma}$ is a Gaussian blur kernel of size $41\times 41$ and variance~$\sigma^2$, $\downarrow_{R}$ is bilinear down-sampling by a factor $R$, $N_{\delta}$ is white Gaussian noise of variance $\delta^2$, and $\smash{[\cdot]_{\text{JPEG}_{Q}}}$ is JPEG compression-decompression with quality factor $Q$.
Similarly to~\citep{difface}, we synthesize the degraded images by sampling $\sigma,R,\delta$ and $Q$ uniformly from $[0.1,15],[0.8,32],[0,20],$ and $[30,100]$, respectively.
See~\Cref{section:implementation} for additional implementation details.

\subsubsection{Evaluation settings}
For evaluation, we consider the common synthetic CelebA-Test benchmark, as well as the real-world data sets LFW-Test~\citep{wang2021gfpgan,lfw-original}, WebPhoto-Test~\citep{wang2021gfpgan}, CelebAdult-Test~\citep{wang2021gfpgan}, and WIDER-Test~\citep{zhou2022codeformer}.
CelebA-Test consists of 3,000 high-quality images taken from the test partition of CelebA-HQ~\citep{karras2018progressive}, and the degraded images were synthesized by~\citet{wang2021gfpgan}.
For the real-world data sets, the degradations are unknown and there is no access to the clean ground-truth images.
We compare our performance with DOT~\citep{dot} and leading blind face restoration models, including BFRffussion~\citep{bfrfussion}, DiffBIR~\citep{2023diffbir}, DifFace~\citep{difface}, CodeFormer~\citep{zhou2022codeformer}, GFPGAN~\citep{wang2021gfpgan}, VQFRv1 and VQFRv2~\citep{vqfr}, RestoreFormer and RestoreFormer++~\citep{wang2022restoreformer,wang2023restoreformer++}.
We do not compare with FlowIE~\citep{Zhu_2024_CVPR}, as the official checkpoints of this method are currently unavailable. However, note that FlowIE is a \emph{conditional} method that employs a ControlNet (similarly to DiffBIR).
Namely, it falls under the category of methods that attempt to sample from the posterior distribution, which are fundamentally different from PMRF.
Notably, the restoration methods that we compare against also use the degradation model from \Cref{eq:real-world-deg}, though the ranges of $\sigma$, $R$, $\delta$, and $Q$ differ across methods. The ranges we choose, those from~\citep{difface}, are the most \emph{severe} among all the compared methods.
For example, the range of $R$ we use is $[0.8,32]$, whereas~\citet{wang2021gfpgan} use $[1,8]$.
Thus, PMRF attempts to solve a more difficult restoration task than some of the compared methods.
In the following experiments, we use $K=25$ flow steps in PMRF (\Cref{algorithm:PMRF}).
Refer to~\Cref{appendix:evaluation} for an evaluation of additional values of $K$, and to~\Cref{appendix:dot} for the implementation details of DOT.
\subsubsection{Results on CelebA-Test}
For the CelebA-Test benchmark, we measure the perceptual quality by FID~\citep{fid}, KID~\citep{kid}, NIQE~\citep{niqe}, and Precision~\citep{precisionrecall}, and measure the distortion by the PSNR, SSIM~\citep{ssim}, and LPIPS~\citep{zhang2018perceptual}.
Similarly to previous works~\citep{wang2021gfpgan,vqfr}, we also compute the identity metric Deg (using the embedding angle of ArcFace~\citep{8953658}) and the landmark distance LMD.
Both of these can be considered as distortion measures, as they quantify some type of discrepancy between each reconstructed image and its ground-truth counterpart.

The results are reported in~\Cref{table:celeba-test-blind}.
Notably, PMRF outperforms all other methods in FID, KID, PSNR, and SSIM, achieves the second best scores in NIQE, Precision and Deg, and the third best scores in LPIPS, and LMD.
Interestingly, no other method attains such a \emph{consensus} in performance like PMRF, namely, where none of the measures are significantly compromised compared to the state-of-the-art.
For example, while DifFace achieves the highest Precision, it attains worse LMD, Deg, LPIPS, SSIM, and PSNR compared to the third best method in each of these metrics.
This demonstrates that PMRF produces robust reconstructions, in the sense that it does not ``over-fit'' particular perceptual quality or distortion measures, but rather achieves high performance in all of them simultaneously.
Visual results are provided in~\Cref{fig:qualitative-celeba-test} and in~\Cref{fig:celeba_qualitative} in the appendix.

\begin{figure}[t]
    \centering
    \includegraphics[width=1\linewidth]{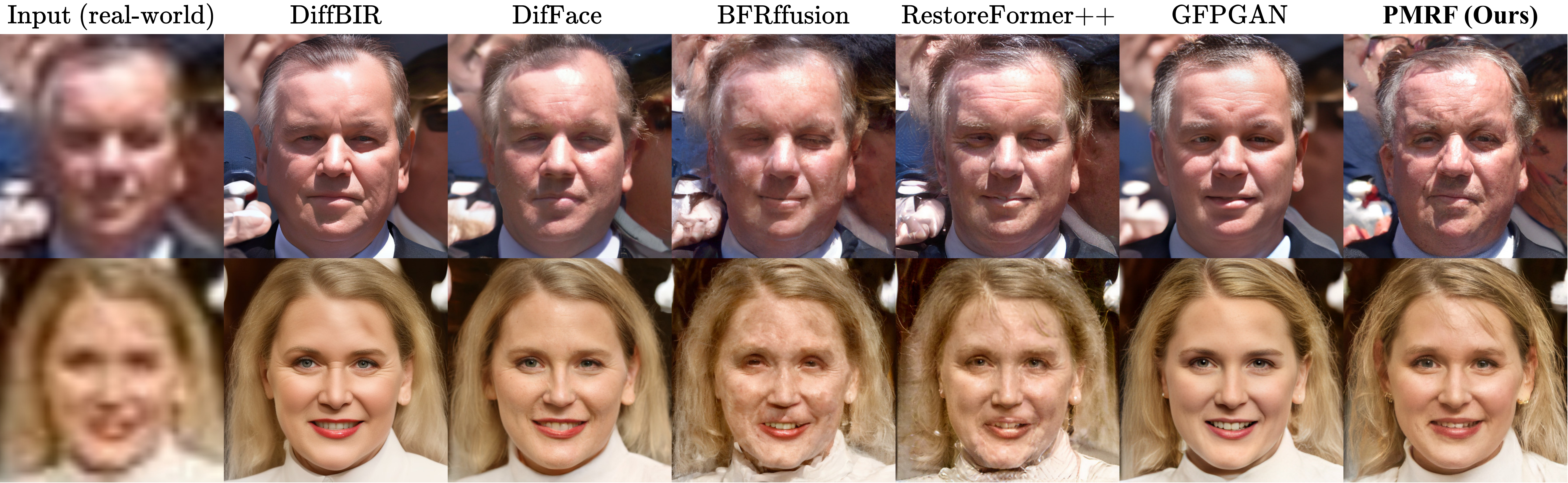}

    \vspace{0.1cm}
    
    \includegraphics[width=1\linewidth]{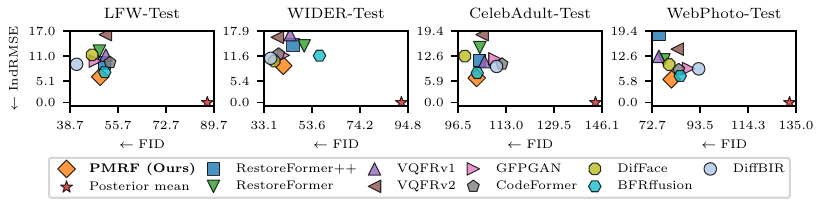}
    
    \caption{Real-world face image restoration. \textbf{Top}: Qualitative results on inputs from the \textbf{WIDER-Test} data set.
    \textbf{Bottom}: Comparison on the ``distortion''-perception plane (IndRMSE \textit{vs.}~FID), where IndRMSE \emph{indicates} the RMSE of each method (the true distortion cannot be computed as there is no access to the ground-truth images). Our algorithm outperforms all other methods in IndRMSE, while achieving on-par perceptual quality compared to the state-of-the-art.}
    \label{fig:wider_qualitative_and_real_world_quantitative}
\end{figure}

\subsubsection{Results on real-world degraded images}
Evaluating the distortion for real-world degraded images is impossible, as there is no access to the ground-truth images.
Consequently, previous works conduct only a perceptual quality evaluation (\eg, FID) on real-world data sets such as WIDER-Test and LFW-Test.
Yet, high perceptual quality alone is clearly not indicative of reconstruction performance (to attain high perceptual quality, one may simply ignore the inputs and generate samples from $p_{X}$).
Thus, we consider a measure which \emph{indicates} the Root MSE (RMSE) and allows ranking algorithms according to their (approximate) RMSE, \emph{without} access to the ground-truth images.
Specifically, for any estimator $\smash{\hat{X}}$ it holds that
\begin{align}
    \mathbb{E}[\norm{X-\hat{X}}^{2}]\approx\mathbb{E}[\norm{\hat{X}-f(Y)}^{2}]+m,
\end{align}
where $\smash{f(Y)\approx\hat{X}^{*}}$ is an approximation of the true posterior mean predictor $\smash{\hat{X}^{*}}$, and $m$ is a constant that does not depend on $\smash{\hat{X}}$ (see~\Cref{appendix:proxymse} for an explanation).
Thus, the square root of $\smash{\mathbb{E}[\norm{\hat{X}-f(Y)}^{2}]}$, which we denote by IndRMSE, \emph{indicates} the true RMSE.
We utilize the posterior mean predictor trained by~\citep{difface}\footnote{Importantly, the \emph{exact} same posterior mean predictor model (and weights) is also used by other methods such as DifFace and DiffBIR, so this is a fair evaluation.} as $f$, and compute the IndRMSE of all the evaluated algorithms on the LFW-Test, WebPhoto-Test, CelebAdult-Test, and WIDER-Test data sets.
As before, we evaluate perceptual quality by FID, KID, NIQE, and Precision.
In~\Cref{fig:wider_qualitative_and_real_world_quantitative} we provide visual results on inputs from the WIDER-Test data set, and compare the algorithms on a ``distortion''-perception plane (IndRMSE \textit{vs.}~FID). DOT is not plotted as it achieves far worse FID compared to other methods.
Our algorithm attains the best (smallest) IndRMSE on all data sets, while achieving on-par perceptual quality compared to the state-of-the-art.
This indicates that PMRF achieves superior distortion on such real-world data sets, while not compromising perceptual quality.
In the appendix, we report the rest of the perceptual quality measures in~\Cref{tab:lfw_quantitative,tab:quantitative_celebadult,tab:webphoto_quantitative,tab:wider_quantitative}, provide visual results in~\Cref{fig:webphoto_qualitative,fig:qualitative-fig1,fig:celebadult_qualitative}, and also report the performance of DOT.
\begin{figure}[t]
    \centering
    \includegraphics[width=1\linewidth]{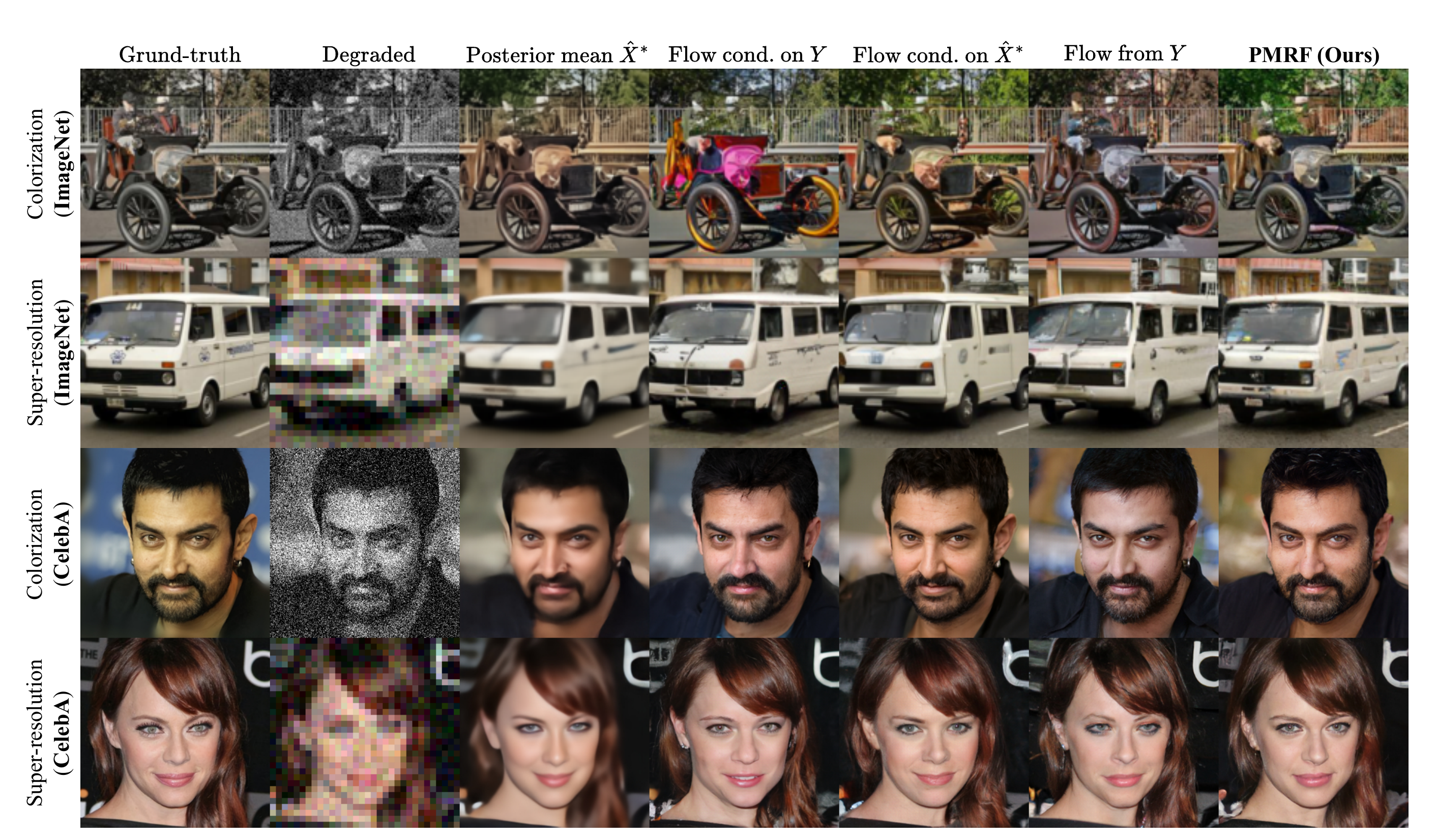}

    \vspace{0.1cm}

    \includegraphics[width=1\linewidth]{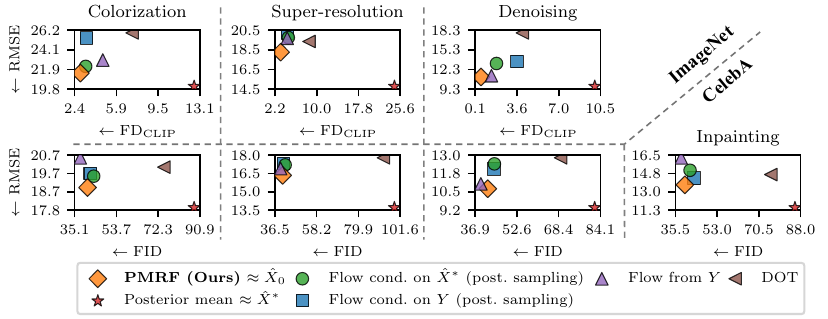}
    
    \caption{A controlled experiment comparing PMRF (our method) with several baseline methods, where the models are trained with the same architecture, hyper-parameters, \etc. (see~\Cref{section:ablation}). \textbf{Top}: Qualitative comparison of PMRF and the baseline methods on several tasks.
    \textbf{Bottom}: Quantitative comparison on the distortion-perception plane. DOT is not a flow model, but rather another approach that attempts to approximate $\smash{\hat{X}_{0}}$ (like PMRF).
    These experiments demonstrate that PMRF is either superior or is on-par with previous frameworks (\ie, posterior sampling or flowing from $Y$) on a variety of image restoration tasks. See~\Cref{section:ablation} for more details.}
    \label{fig:fig_ours_vs_posterior}
\end{figure}

\subsection{Comparing PMRF with previous frameworks in controlled experiments}\label{section:ablation}
One may wonder whether the performance of PMRF is attributed to the framework \emph{itself} (\Cref{algorithm:PMRF}), or, maybe it is attributed to the model architecture, the rectified flow training approach, the chosen hyper-parameters, \etc.
Could we have done better by training a flow to sample from the posterior, or by adopting the approach of~\citep{albergo2023stochasticinterpolantsdatadependentcouplings} and flow directly from $Y$?
Here, we conduct a controlled study where we demonstrate that the high performance of PMRF is indeed attributed to the proposed framework itself (\Cref{algorithm:PMRF}). 
Specifically, we consider several image restoration tasks (denoising, super-resolution, \etc), where we train PMRF and several baseline methods on the ``same grounds'', using the ImageNet~\citep{5206848} ($128\times 128$) and FFHQ ($256\times 256$) data sets.
In each task we train two \emph{conditional} rectified flow models, where one is conditioned on the degraded measurement $Y$ (we call this method \emph{flow conditioned on $Y$}), and the other is conditioned on the posterior mean predictor $f_{\omega^{*}}(Y)$ (we call this method \emph{flow conditioned on $\smash{\hat{X}^{*}}$}).
The first model represents posterior sampling methods, and the second model allows for a fair comparison of model capacity with PMRF (since PMRF is comprised of $f_{\omega^{*}}(Y)$ and a flow model).
In fact, theoretically speaking, the second approach achieves precisely the same MSE as the posterior sampler (see~\Cref{appendix:conditioning-on-xstar}), and is often used in practice (\eg, in~\citep{2023diffbir,Zhu_2024_CVPR}).
In addition, we train an \emph{unconditional} rectified flow model, where the forward process is defined as $\smash{Z_{t}=tX+(1-t)Z_{0}}$, $\smash{Z_{0}=Y^{\dagger}+\sigma_{s}\epsilon}$, $\smash{\epsilon\sim\mathcal{N}(0,I)}$, and $\smash{Y^{\dagger}}$ is the up-scaled version of the degraded measurement $Y$ such that it matches the dimensionality of $X$ (we call this method \emph{flow from $\smash{Y}$}). This method represents the frameworks in~\citep{albergo2023stochasticinterpolantsdatadependentcouplings,li2023bbdm,delbracio2023inversion}, which we discuss in~\Cref{section:related-work}.
All of the models are trained with the same hyper-parameters as PMRF, using the same architecture, learning rate, weight decay, number of training epochs, \etc.
Moreover, for PMRF and flow conditioned on $\smash{\hat{X}^{*}}$ method, we use the exact same architecture and weights for $f_{\omega^{*}}(Y)$.
To clarify the differences between the mathematical formulations of the baseline methods, in~\Cref{tab:ablation-model-comparison} in the appendix we summarize the definitions of the training loss and the forward process of all methods.
Moreover, in~\Cref{algorithm:naive-flow,algorithm:posterior-conditioned-on-xstar,algorithm:posterior-conditioned-on-y} we disclose a pseudo-code for the training and inference procedures of the baseline methods.
While DOT is not a flow method, we still evaluate its performance as it is related to PMRF.

In~\Cref{fig:fig_ours_vs_posterior} we compare the algorithms on the distortion-perception plane (RMSE \textit{vs.}~FID for face restoration, and RMSE \textit{vs.}~$\text{FD}_{\text{CLIP}}$~\citep{stein2023exposing} for ImageNet restoration), using $K=100$ flow steps for each flow algorithm.
We clearly PMRF \emph{dominates} all other methods in most tasks, achieving notably smaller RMSE without compromising (and sometimes even \emph{improving}) perceptual quality.
This demonstrates that PMRF achieves our desired goal, which is to attain low distortion without compromising on perceptual quality.
For the image denoising tasks, we observe that PMRF and flow from $Y$ attain relatively similar performance, and both dominate the posterior sampling approaches.
We hypothesize that, in some tasks (\eg, denoising), flowing from $Y$ may be as effective as PMRF in terms of approximating $\smash{\hat{X}_{0}}$.
To demonstrate this, we prove in~\Cref{appendix:flow-from-y-optimal-example1} that flowing from $Y$ is optimal in the toy problem in~\Cref{example:1} (just like PMRF).
Yet, our experiments demonstrate that PMRF generally leads to better performance compared to previous frameworks.
To assess the effectiveness of each method given different inference time constraints, in~\Cref{fig:fig_ours_vs_posterior_num_steps_ablation} in the appendix we vary the number of flow inference steps $K$ for each method.
Interestingly, we observe that PMRF is still either on-par or dominates the other methods for \emph{any} given number of inference steps.
These results further demonstrate that the superior performance of PMRF is attributed to our framework itself, rather than to the chosen hyper-parameters. 
See~\Cref{appendix:ablation} for more details, and refer to~\Cref{fig:ablation-colorization-qualitative,fig:ablation-denoising-qualitative,fig:ablation-inpainting-qualitative,fig:ablation-sr-qualitative,fig:ablation-colorization-qualitative-imagenet,fig:ablation-denoising-qualitative-imagenet,fig:ablation-sr-qualitative-imagenet} in the appendix for visual comparisons.
\section{Conclusion and limitations}
\label{section:discussion}
We presented a method that directly approximates $\smash{\hat{X}_{0}}$ -- the estimator that minimizes the MSE under a perfect perceptual index constraint (\Cref{eq:dmax}).
We showed that our approach, coined PMRF, is a simple yet highly effective image restoration algorithm that outperforms previous frameworks (\eg, posterior sampling, flow from $Y$, and GAN-based methods) in a variety of image restoration tasks.
As we explained in~\Cref{section:method}, PMRF alleviates the issues resulting from solving the ODE by adding Gaussian noise to the posterior mean predictions.
We note that the noise level $\sigma_{s}$ should be carefully tuned, as taking it to be too large or too small may cause the MSE or the perceptual quality of PMRF to degrade, respectively. While the flow from $Y$ method (\Cref{algorithm:naive-flow}) suffers from the same limitation (though it does not provide a theoretical guarantee on the MSE, like PMRF), this may be considered a disadvantage of PMRF compared to posterior sampling methods (\eg,~\Cref{algorithm:posterior-conditioned-on-y}), which do not require such a hyper-parameter.
Finally, we proved in~\Cref{theorem:1} that, under some conditions, PMRF is guaranteed to achieve a smaller MSE than the posterior sampler. However, as in~\citep{liu2023flow}, one could argue that the assumptions in~\Cref{theorem:1} may be too limiting in some cases.

\section*{Reproducibility statement}
Our codes are available at \url{https://github.com/ohayonguy/PMRF}.
We provide all the explanations and checkpoints necessary to reproduce our results, including training, inference, and the computation of the distortion and perceptual quality measures in~\Cref{section:experiments}.
Besides our code, our paper discloses all the implementation details required to reproduce the results, including architecture details, training hyper-parameters, \etc.
Refer to~\Cref{section:blind-face-restoration,appendix:experiments,section:ablation,appendix:ablation} for implementation details, and to~\Cref{tab:training-hyperparams} in the appendix for a summary of our training hyper-parameters.
\section*{Acknowledgments}
This research was partially supported by the Israel Science Foundation (ISF) under Grants 2318/22, 951/24 and 409/24, and by the Council for Higher Education – Planning and Budgeting Committee.

\bibliography{iclr2025_conference}
\bibliographystyle{iclr2025_conference}

\clearpage
\appendix
\section{Supplementary explanations for PMRF}
\subsection{Proof that conditioning on $\hat{X}^{*}$ achieves the same MSE as posterior sampling}\label{appendix:conditioning-on-xstar}
\begin{proposition}
Let $\smash{\hat{X}'}$ be the estimator which, given any degraded measurement $y$, first predicts the posterior mean $\smash{\hat{x}^{*}=\mathbb{E}[X|Y=y]}$ and then samples from $\smash{p_{X|\hat{X}^{*}}(\cdot|\hat{x}^{*})}$\footnote{Note that $\smash{\hat{X}'}$ is a ``posterior sampler'' which is conditioned on $\smash{\hat{X}^{*}}$. Thus,~\Cref{algorithm:posterior-conditioned-on-xstar} represents such an algorithm, which is one of the baseline methods we evaluate in~\Cref{section:ablation}.}.
Then, the MSE of $\hat{X}'$ equals twice the MMSE, which is the MSE attained by the posterior sampler.
\end{proposition}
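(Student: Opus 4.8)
The plan is to condition on $\hat{X}^{*}$ and exploit the conditional structure that the construction of $\hat{X}'$ imposes. By definition, $\hat{X}'$ is drawn from $p_{X|\hat{X}^{*}}(\cdot|\hat{x}^{*})$ using only the value $\hat{x}^{*}$ together with fresh randomness. Consequently $X$ and $\hat{X}'$ are conditionally independent given $\hat{X}^{*}$, and they share the same conditional law $p_{X|\hat{X}^{*}}$. Writing $\mu\coloneqq\mathbb{E}[X\mid\hat{X}^{*}]$ for their common conditional mean, I would first compute the conditional MSE by the expansion
\begin{align}
\mathbb{E}\big[\norm{X-\hat{X}'}^{2}\mid\hat{X}^{*}\big]
&=\mathbb{E}\big[\norm{X-\mu}^{2}\mid\hat{X}^{*}\big]+\mathbb{E}\big[\norm{\hat{X}'-\mu}^{2}\mid\hat{X}^{*}\big]\nonumber\\
&\quad-2\,\mathbb{E}\big[(X-\mu)^{\top}(\hat{X}'-\mu)\mid\hat{X}^{*}\big].\nonumber
\end{align}

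Here the cross term vanishes: by conditional independence it factorizes as $\mathbb{E}[X-\mu\mid\hat{X}^{*}]^{\top}\mathbb{E}[\hat{X}'-\mu\mid\hat{X}^{*}]$, and each factor is zero because $\mu$ is the conditional mean of both $X$ and $\hat{X}'$. Since $X$ and $\hat{X}'$ are identically distributed given $\hat{X}^{*}$, the two remaining terms are equal, each equal to $\Tr(\Cov(X\mid\hat{X}^{*}))$. Taking the outer expectation then gives $\mathbb{E}[\norm{X-\hat{X}'}^{2}]=2\,\mathbb{E}[\Tr(\Cov(X\mid\hat{X}^{*}))]=2\,\mathbb{E}[\norm{X-\mu}^{2}]$.

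The step that carries the real content is identifying $\mathbb{E}[\norm{X-\mu}^{2}]$ with the MMSE. Because $\hat{X}^{*}=\mathbb{E}[X\mid Y]$ is a deterministic function of $Y$, the $\sigma$-algebra it generates is contained in that of $Y$, so the tower property yields $\mu=\mathbb{E}[\mathbb{E}[X\mid Y]\mid\hat{X}^{*}]=\mathbb{E}[\hat{X}^{*}\mid\hat{X}^{*}]=\hat{X}^{*}$. Hence $\mathbb{E}[\norm{X-\mu}^{2}]=\mathbb{E}[\norm{X-\hat{X}^{*}}^{2}]$, which is exactly the MMSE, and therefore $\mathbb{E}[\norm{X-\hat{X}'}^{2}]=2\cdot\mathrm{MMSE}$. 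Since the excerpt already recalls, following~\citet{Blau_2018_CVPR}, that the posterior sampler attains precisely twice the MMSE, this matches the claim and completes the argument. I expect no genuine obstacle beyond this tower-property identity; the only point requiring care is to state the conditional independence of $X$ and $\hat{X}'$ given $\hat{X}^{*}$ cleanly, since it is exactly what both kills the cross term and makes the two variance terms coincide.
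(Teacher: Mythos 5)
Your proof is correct and rests on the same Pythagorean decomposition the paper uses: split $\norm{X-\hat{X}'}^{2}$ around the posterior mean, kill the cross term, and observe that the two remaining legs are equal because $\hat{X}'$ shares the conditional law of $X$ given $\hat{X}^{*}$. The only difference is presentational: the paper obtains the decomposition by citing Lemma~2 of \citet{dror} (orthogonality with respect to $Y$), whereas you re-derive it self-containedly by conditioning on $\hat{X}^{*}$, which costs you the extra (correctly handled) tower-property step $\mathbb{E}[X\mid\hat{X}^{*}]=\hat{X}^{*}$ but makes the role of the conditional independence of $X$ and $\hat{X}'$ given $\hat{X}^{*}$ explicit.
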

\begin{proof}
The MSE of $\smash{\hat{X}'}$ is given by
\begin{align}
    \mathbb{E}[\norm{X-\hat{X}'}^{2}]=\mathbb{E}[\norm{X-\hat{X}^{*}}^{2}]+\mathbb{E}[\norm{X'-\hat{X}^{*}}^{2}],\label{eq:orth-20}
\end{align}
where this equality follows from Lemma 2 in~\citep{dror} (Appendix B.1).
By the definition of $\smash{\hat{X}'}$ we have $\smash{p_{\hat{X}',\hat{X}^{*}}=p_{X,\hat{X}^{*}}}$, so
\begin{align}
    \mathbb{E}[\norm{X'-\hat{X}^{*}}^{2}]=\mathbb{E}[\norm{X-\hat{X}^{*}}^{2}].
\end{align}
Substituting this result into~\Cref{eq:orth-20}, we get
\begin{align}
    \mathbb{E}[\norm{X-\hat{X}'}^{2}]=2\mathbb{E}[\norm{X-\hat{X}^{*}}^{2}].\label{eq:equiv-posterior-mse}
\end{align}
Namely, $\smash{\hat{X}'}$ attains precisely the same MSE as the posterior sampler, which is equal to twice the MMSE~\citep{Blau_2018_CVPR}.
Thus, in theory, one should not expect to improve the MSE of a conditional diffusion/flow model by supplying $\smash{\hat{X}^{*}}$ as a condition instead of $Y$.
\end{proof}

\subsection{Proof of~\Cref{theorem:1}}\label{appendix:theorem1-proof}
For completeness, we first restate \Cref{theorem:1} and then provide its proof.
\thmone*
\begin{proof}
We first prove \textit{(a)} and \textit{(b)} assuming that the solution for the ODE in~\Cref{eq:pmrf-gen} exists and is unique for $\sigma_{s}=0$.
Then, we will prove \textit{(c)} by also assuming that the distribution of $\smash{(X-\hat{X}^{*})|Z_{t}=z_{t}}$ is non-degenerate for almost every $z_{t}$ and $t\in [0,1]$.

From Theorem 3.3 in~\citep{liu2023flow} we have $\smash{p_{\hat{Z}_{t}}=p_{Z_{t}}}$ for every $t\in [0,1]$. This implies that $\smash{p_{\hat{Z}_{1}}=p_{Z_{1}}=p_{X}}$, \ie, PMRF attains a perfect perceptual index when $\sigma_{s}=0$. This proves \textit{(a)}.

Next, without additional assumptions, we will prove \textit{(b)} by showing that
\begin{align}
    \mathbb{E}[\norm{\hat{Z}_{1}-\hat{X}^{*}}^{2}]\leq\mathbb{E}[\norm{X-\hat{X}^{*}}^{2}],
\end{align}
which will imply that the MSE of $\smash{\hat{Z}_{1}}$ can only be smaller than that of the posterior sampler.
Since $\sigma_{s}=0$, we have $\smash{Z_{0}=\hat{X}^{*}+\sigma_{s}\epsilon=\hat{X}^{*}}$.
Following similar arguments to those in the proof of Theorem~3.5 in~\citep{liu2023flow}, it holds that
\begin{align}
    \mathbb{E}[\norm{\hat{Z}_{1}-\hat{X}^{*}}^{2}]&=\mathbb{E}\left[\normbig{\int_{0}^{1} v_{\text{RF}}(\hat{Z}_{t},t)dt}^{2}\right]\label{eq:221}\\
    &=\mathbb{E}\left[\normbig{\int_{0}^{1} v_{\text{RF}}(Z_{t},t)dt}^{2}\right]\label{eq:228}\\
    &\leq\mathbb{E}\left[\int_{0}^{1} \normbig{v_{\text{RF}}(Z_{t},t)}^{2}dt\right]\label{eq:222}\\
    &=\mathbb{E}\left[\int_{0}^{1} \normbig{\mathbb{E}[X-\hat{X}^{*}|Z_{t}]}^{2}dt\right]\label{eq:223}\\
    &\leq\mathbb{E}\left[\int_{0}^{1} \mathbb{E}[\norm{X-\hat{X}^{*}}^{2}| Z_{t}]dt\right]\label{eq:224}\\
    &=\int_{0}^{1} \mathbb{E}\left[\mathbb{E}[\norm{X-\hat{X}^{*}}^{2}| Z_{t}]\right]dt\label{eq:225}\\
    &=\int_{0}^{1}\mathbb{E}[\norm{X-\hat{X}^{*}}^{2}]dt\label{eq:226}\\
    &=\mathbb{E}[\norm{X-\hat{X}^{*}}^{2}],\label{eq:227}
\end{align}
where~\Cref{eq:221} follows from the definition of $\smash{\hat{Z}_{1}}$ and $\hat{X}^{*}$,~\Cref{eq:228} follows from the fact that $p_{\hat{Z}_{t}}=p_{Z_{t}}$,~\Cref{eq:222} follows from Jensen's inequality,~\Cref{eq:223} follows from the definition of $v_{\text{RF}}(Z_{t},t)$,~\Cref{eq:224} follows from Jensen's inequality,~\Cref{eq:225} follows from the linearity of the integral operator, and~\Cref{eq:226} follows from the law of total expectation.
Thus, we have $\smash{\mathbb{E}[\norm{\hat{Z}_{1}-\hat{X}^{*}}^{2}]\leq\mathbb{E}[\norm{X-\hat{X}^{*}}^{2}]}$.
Combining this result with Lemma 2 from~\citep{dror} (Appendix B.1), we conclude that
\begin{align}
   \mathbb{E}[\norm{X-\hat{Z}_{1}}^{2}]&=\mathbb{E}[\norm{X-\hat{X}^{*}}^{2}]+\mathbb{E}[\norm{\hat{Z}_{1}-\hat{X}^{*}}^{2}]\nonumber\\
   &\leq2\mathbb{E}[\norm{X-\hat{X}^{*}}^{2}],
\end{align}
where the left hand side is the MSE of PMRF, and the right hand side is the MSE of the posterior sampler, which always equals twice the MMSE~\citep{Blau_2018_CVPR}.

Finally, to prove \textit{(c)}, let us further assume that $\smash{(X-\hat{X}^{*}})|Z_{t}=z_{t}$ is a non-degenerate random vector for every $\smash{z_{t}\in\supp{p_{Z_{t}}}}$ and $t\in [0,1]$.
Thus, the inequality in~\Cref{eq:224} becomes strict (from Jensen's inequality for strictly convex functions), and hence we have $\smash{\mathbb{E}[\norm{\hat{Z}_{1}-\hat{X}^{*}}^{2}]<\mathbb{E}[\norm{X-\hat{X}^{*}}^{2}]}$.
Combining this result with Lemma 2 from~\citep{dror} (Appendix B.1), we conclude that
\begin{align}
   \mathbb{E}[\norm{X-\hat{Z}_{1}}^{2}]<2\mathbb{E}[\norm{X-\hat{X}^{*}}^{2}].
\end{align}
Namely, the MSE of $\smash{\hat{Z}_{1}}$ (left hand side) is strictly smaller than that of the posterior sampler (right hand side).
\end{proof}
\subsection{Proof of the results in~\Cref{example:1}}\label{appendix:example-proof}
From~\citep{dror,Blau_2018_CVPR}, we know that $\hat{X}_{0}$ in~\Cref{example:1} attains a MSE that is \emph{strictly} smaller than that of the posterior sampler (assuming that $\sigma_{N}>0$).
Specifically, the closed-form solution of $\hat{X}_{0}$ in~\Cref{example:1} is given by~\citep{dror}:
\begin{align}
    \hat{X}_{0}=\frac{1}{\sqrt{1+\sigma_{N}^{2}}}Y.\label{eq:x0-standard-denoising}
\end{align}
Moreover, in this example, it is well known that the posterior mean $\mathbb{E}[X|Y]$ is given by
\begin{align}
    \hat{X}^{*}=\frac{1}{1+\sigma_{N}^{2}}Y.
\end{align}
Next, we will prove that:
\begin{enumerate}[label=\textit{(\alph*)}]
    \item All the assumptions in~\Cref{theorem:1} hold.
    \item $\hat{Z}_{1}=\hat{X}_{0}$ almost surely.
\end{enumerate}
\paragraph{Proof of \textit{(a)}.}
Since $\sigma_{s}=0$, we have $v_{\text{RF}}(Z_{t},t)=\mathbb{E}[X-\hat{X}^{*}|Z_{t}]$ and $\smash{Z_{t}=tX+(1-t)\hat{X}^{*}}$.
Below, we show that
\begin{align}
    &\text{Cov}(X-\hat{X}^{*},Z_{t})=t\frac{\sigma_{N}^{2}}{1+\sigma_{N}^{2}},~~\mbox{and}\label{eq:cov_res}\\
    &\text{Var}(Z_{t})=t^{2}\frac{\sigma_{N}^{2}}{1+\sigma_{N}^{2}}+\frac{1}{1+\sigma_{N}^{2}}.\label{eq:var_zt}
\end{align}
Since $X-\hat{X}^{*}$ and $Z_{t}$ are jointly Gaussian\footnote{$X-\hat{X}^{*}$ and $Z_{t}$ can be written as a linear transformation of $(X,Y)$, which are jointly Gaussian random variables. Thus, $X-\hat{X}^{*}$ and $Z_{t}$ are jointly Gaussian.}, we have
\begin{align}
    v_{\text{RF}}(Z_{t},t)&=\mathbb{E}[X-\hat{X}^{*}|Z_{t}]\nonumber\\
    &=\mathbb{E}[X-\hat{X}^{*}]+\frac{\text{Cov}(X-\hat{X}^{*},Z_{t})}{\text{Var}(Z_{t})}(Z_{t}-\mathbb{E}[Z_{t}])\nonumber\\
    &=\frac{\text{Cov}(X-\hat{X}^{*},Z_{t})}{\text{Var}(Z_{t})}Z_{t},\label{eq:zero-mean}\\
    &=\frac{t\frac{\sigma_{N}^{2}}{1+\sigma_{N}^{2}}}{t^{2}\frac{\sigma_{N}^{2}}{1+\sigma_{N}^{2}}+\frac{1}{1+\sigma_{N}^{2}}}Z_{t}\nonumber\\
    &=\frac{t\sigma_{N}^{2}}{1+t^{2}\sigma_{N}^{2}}Z_{t},
\end{align}
where~\Cref{eq:zero-mean} follows from the fact that $\mathbb{E}[X-\hat{X}^{*}]=0$ and $\mathbb{E}[Z_{t}]=0$.
One can verify that the solution of $d\hat{Z}_{t}=v_{\text{RF}}(\hat{Z}_{t},t)dt$ for any initial condition $\hat{Z}_{0}=c$ is unique and is given by
\begin{align}
\hat{Z}_{t}=c\sqrt{1+t^{2}\sigma_{N}^{2}}.\label{eq:ode-sol-gaussian}
\end{align}
To show that the distribution of $(X-\hat{X}^{*})|Z_{t}=z_{t}$ is non-degenerate for almost every $z_{t}$ and $t\in [0,1]$, note that
\begin{align}
    \text{Var}(X-\hat{X}^{*})&=\text{Cov}(X-\hat{X}^{*},X-\hat{X}^{*})\nonumber\\
    &=\text{Cov}(X,X)-2\text{Cov}(X,\hat{X}^{*})+\text{Cov}(\hat{X}^{*},\hat{X}^{*})\nonumber\\
    &=1-2\text{Cov}\left(X,\frac{1}{1+\sigma_{N}^{2}}Y\right)+\text{Cov}\left(\frac{1}{1+\sigma_{N}^{2}}Y,\frac{1}{1+\sigma_{N}^{2}}Y\right)\nonumber\\
    &=1-\frac{2}{1+\sigma_{N}^{2}}\text{Cov}(X,Y)+\frac{1}{(1+\sigma_{N}^{2})^{2}}\text{Cov}(Y,Y)\nonumber\\
    &=1-\frac{2}{1+\sigma_{N}^{2}}+\frac{1}{1+\sigma_{N}^{2}}\nonumber\\
    &=1-\frac{1}{1+\sigma_{N}^{2}}\nonumber\\
    &=\frac{\sigma_{N}^{2}}{1+\sigma_{N}^{2}}.
\end{align}
Thus, for any $t>0$, and assuming $\sigma_N>0$, the correlation between $X-\hat{X}^{*}$ and $Z_{t}$ is given by
\begin{align}
    \frac{\text{Cov}(X-\hat{X}^{*},Z_{t})}{\sqrt{\text{Var}(Z_{t})\text{Var}(X-\hat{X}^{*})}}&=\frac{t\frac{\sigma_{N}^{2}}{1+\sigma_{N}^{2}}}{\sqrt{\left(t^{2}\frac{\sigma_{N}^{2}}{1+\sigma_{N}^{2}}+\frac{1}{1+\sigma_{N}^{2}}\right)\left(\frac{\sigma_{N}^{2}}{1+\sigma_{N}^{2}}\right)}}\nonumber\\
    &= \frac{t\sigma_N}{\sqrt{1+t^2\sigma_{N}^2}} \nonumber\\
    &= \frac{1}{\sqrt{1+\frac{1}{t^2\sigma_{N}^2}}}\nonumber\\
    &<1.
    \label{eq:correlation}
\end{align}
Namely, the correlation between $X-\hat{X}^{*}$ and $Z_{t}$ is strictly smaller than $1$ for every $t\in(0,1]$.
Moreover, for $t=0$ the correlation between $X-\hat{X}^{*}$ and $Z_{t}$ clearly equals zero, so such a correlation is smaller than 1 for every $t\in[0,1]$.
This implies that the distribution of $(X-\hat{X}^{*})|Z_{t}=z_{t}$ is non-degenerate for almost every $z_{t}$ and $t\in [0,1]$, and so all the assumptions in~\Cref{theorem:1} hold.

To prove~\Cref{eq:cov_res,eq:var_zt}, first note that $\text{Cov}(X,\hat{X}^{*})=\text{Cov}(\hat{X}^{*},\hat{X}^{*})=\frac{1}{1+\sigma_{N}^{2}}$, and so $\text{Cov}(X,\hat{X}^{*})-\text{Cov}(\hat{X}^{*},\hat{X}^{*})=0$.
Thus,
\begin{align}
    \text{Cov}(X-\hat{X}^{*},Z_{t})&=\text{Cov}(X-\hat{X}^{*},tX+(1-t)\hat{X}^{*})\nonumber\\
    &=t(\text{Cov}(X,X)-\text{Cov}(X,\hat{X}^{*}))+(1-t)(\text{Cov}(X,\hat{X}^{*})-\text{Cov}(\hat{X}^{*},\hat{X}^{*}))\nonumber\\
    &=t\left(1-\frac{1}{1+\sigma_{N}^{2}}\right)\nonumber\\
    &=t\frac{\sigma_{N}^{2}}{1+\sigma_{N}^{2}},
\end{align}
and,
\begin{align}
    \text{Var}(Z_{t})&=\text{Cov}(Z_{t},Z_{t})\nonumber\\
    &=\text{Cov}(tX+(1-t)\hat{X}^{*},tX+(1-t)\hat{X}^{*})\nonumber\\
    &=t^{2}\text{Cov}(X,X)+2t(1-t)\text{Cov}(X,\hat{X}^{*})+(1-t)^{2}\text{Cov}(\hat{X}^{*},\hat{X}^{*})\nonumber\\
    &=t^{2}+(2t(1-t)+(1-t)^{2})\frac{1}{1+\sigma_{N}^{2}}\nonumber\\
    &=t^{2}+(2t-2t^{2}+1-2t+t^{2})\frac{1}{1+\sigma_{N}^{2}}\nonumber\\
    &=t^{2}+(1-t^{2})\frac{1}{1+\sigma_{N}^{2}}\nonumber\\
    &=t^{2}\frac{\sigma_{N}^{2}}{1+\sigma_{N}^{2}}+\frac{1}{1+\sigma_{N}^{2}}.
\end{align}
\paragraph{Proof of \textit{(b)}.}The proof follows directly from~\Cref{eq:ode-sol-gaussian}.
Specifically, for the initial condition $\smash{\hat{Z}_{0}=\hat{X}^{*}}$, we have
\begin{align}
    \hat{Z}_{1}&=\sqrt{1+\sigma_{N}^{2}}\hat{X}^{*}\nonumber\\
    &=\sqrt{1+\sigma_{N}^{2}}\frac{1}{1+\sigma_{N}^{2}}Y\nonumber\\
    &=\frac{1}{\sqrt{1+\sigma_{N}^{2}}}Y\\
    &=\hat{X}_{0}.
\end{align}
Thus, in~\Cref{example:1}, PMRF with $\sigma_{s}=0$ coincides with the desired optimal estimator $\hat{X}_{0}$.
\subsection{Reflow (optional)}
To potentially improve the MSE of PMRF further, one may conduct a \emph{reflow} procedure~\citep{liu2023flow}, where a sequence of flow models are trained, and the flow model at index $k+1$ learns to flow from the source distribution to the distribution generated by the flow model at index $k$.
Specifically, let $\smash{\hat{Z}_{1}^{k+1}}$ be the random vector generated by PMRF (\Cref{algorithm:PMRF}), where $\smash{\hat{Z}_{1}^{k}}$ replaces the role of $X$ in~\Cref{algorithm:PMRF} and $\smash{\hat{Z}_{1}^{0}=X}$ ($Z_{0}$ remains unchanged).
Thus, from Theorem 3.5 in~\citep{liu2023flow}, we have $\smash{\mathbb{E}[c(\smash{\hat{Z}_{1}^{k+1}}-Z_{0})]\leq\mathbb{E}[c(\smash{\hat{Z}_{1}^{k}}-Z_{0})]}$, which implies the reflowing may only improve the MSE of PMRF, and hence improve the approximation of the desired optimal transport map (Equation~\ref{eq:ot}). We leave this possibility for future work.

\section{Supplementary details and experiments in blind face image restoration}\label{appendix:experiments}
\subsection{Implementation details of PMRF}\label{section:implementation}
During training, we only use random horizontal flips for data augmentation.
We use the SwinIR~\citep{swinir} model trained by~\citet{difface} as the posterior mean predictor $f_{\omega^{*}}$ in~\Cref{algorithm:PMRF}, and use $\sigma_{s}=0.1$.
This model was trained using the same synthetic degradation as in \Cref{eq:real-world-deg}, with the same ranges for $\sigma$, $R$, $\delta$, and $Q$ we mentioned in \Cref{section:blind-face-restoration}.
The SwinIR model's weights are kept frozen during the vector field's training stage, and the same weights are utilized during inference as well.
The vector field $v_{\theta}$ is a HDiT model~\citep{hdit}, which we train from scratch.
As in~\citep{hdit}, we sample $t$ uniformly from $U[0,1]$ using a stratified sampling strategy.
The vector field is trained for 3850 epochs using the AdamW optimizer~\citep{adamw}, with a learning rate of $5\cdot 10^{-4}$, $(\beta_{1},\beta_{2})=(0.9,0.95)$, and a weight decay of $10^{-2}$ (as in~\citep{hdit}).
In the last 350 epochs, we reduce the learning rate gradually, multiplying it by $0.98$ at the end of every epoch.
The training batch size is set to 256 and is kept fixed.
We compute the exponential moving average (EMA) of the model's weights, using a decay of $0.9999$. The EMA weights of the model are then used in all evaluations. Our model is trained using bfloat16 mixed precision. A summary of the vector field training hyper-parameters is provided in~\Cref{tab:training-hyperparams}.

\subsection{Varying the number of flow steps $K$ in PMRF}\label{appendix:evaluation}
In~\Cref{tab:quantitative-num-steps-blind-celeba,tab:quantitative-num-steps-blind-lfw,tab:quantitative-num-steps-blind-wider,tab:quantitative-num-steps-blind-webphoto,tab:quantitative-num-steps-blind-celebadult} we evaluate the performance of PMRF for various choices of $K$ (the number of inference steps in~\Cref{algorithm:PMRF}).
As expected, increasing $K$ generally improves the perceptual quality while harming the distortion.
\subsection{Details of DOT}\label{appendix:dot}
We use the official codes of DOT~\citep{dot} as provided by the authors.
This method performs optimal transport between the source and target distributions in latent space, using the closed-form solution for the optimal transport map between two Gaussians.
As in~\citep{dot}, we use the VAE~\citep{vae} of stable-diffusion~\citep{Rombach_2022_CVPR}.
For computing the latent empirical mean and covariance of the target distribution, we provide to the code the first 1000 images from FFHQ, with images of size $512\times512$ (the default is 100 images, so using 1000 images instead ensures that the performance of DOT is not compromised, as explain by~\citet{dot}).
For computing the latent empirical mean and covariance of the source distribution, we randomly synthesize degraded images according to~\Cref{eq:real-world-deg} from the first 1000 images in FFHQ, and reconstruct each image using the SwinIR model with the pre-trained weights from~\citep{difface} (the same weights we use in PMRF). 
Given a degraded image $y$ at test time, the code of~\citet{dot} first predicts the posterior mean using the SwinIR model, encodes it to latent space, optimally transports the result using the pre-computed empirical means and covariances, and finally uses the decoder to obtain the reconstructed image.

\subsection{Computation of FID, KID, and Precision}
For each data set and algorithm, the FID, KID, and Precision are computed between the entire FFHQ $512\times512$ training set, and the reconstructed images produced for the degraded images in the \emph{test} data set (as in previous works).
For example, for the evaluations on the CelebA-Test data, this means that the FID is computed between the 70,000 FFHQ images, and the 3,000 CelebA-Test reconstructed images.

\section{Supplementary details on~\Cref{section:ablation}}\label{appendix:ablation}
\subsection{Degradations}\label{appendix:ablation-degradations}
\paragraph{Face restoration.}
The degraded images in each face restoration task in the controlled experiments are synthesized according to the following degradations:
\begin{enumerate}
    \item \textbf{Denoising}: We apply additive white Gaussian noise with standard deviation $0.35$.
    \item \textbf{Super-resolution}: We use the $ 8\times$ bicubic down-sampling operator, and add Gaussian noise with standard deviation $0.05$.
    \item \textbf{Inpainting}: We randomly mask $90\%$ of the pixels in the ground-truth image, and add Gaussian noise with standard deviation $0.1$.
    \item \textbf{Colorization}: We average the color channels in the ground-truth image (with a weight of $\smash{\frac{1}{3}}$ for each color channel), and add Gaussian noise with standard deviation $0.25$.   
\end{enumerate}
\paragraph{ImageNet restoration.}
For the general-content (ImageNet) image restoration tasks in the controlled experiments, we consider the following degradations:
\begin{enumerate}
    \item \textbf{Denoising}: We apply additive white Gaussian noise with standard deviation $0.2$.
    \item \textbf{Super-resolution}: We use the $ 4\times$ bicubic down-sampling operator, and add Gaussian noise with standard deviation $0.05$.
    \item \textbf{Colorization}: We average the color channels in the ground-truth image (with a weight of $\smash{\frac{1}{3}}$ for each color channel), and add Gaussian noise with standard deviation $0.05$.   
\end{enumerate}

\subsection{Implementation details of the flow methods}
\subsubsection{Training}
\paragraph{Face restoration.}For all the face restoration tasks in~\Cref{section:ablation}, the models are trained on the FFHQ data set with images of size $256\times256$ (we down-sample the original $1024\times1024$ images to $256\times 256$).
Unlike in the blind face image restoration experiments, where the model is trained on images of size $512\times512$, here we choose to use a smaller image resolution to save computational resources and achieve shorter training times.
We use random horizontal flips for data augmentation.

\paragraph{ImageNet restoration.}
The general-content image restoration models are trained on the ImageNet~\citep{5206848} training data, after resizing the images to $128\times128$ pixels.
To obtain these images, we first resize the original images to have a shorter side of 128 pixels, and then perform random cropping to obtain the desired size.
We use random horizontal flips for data augmentation.

\subsubsection{Choice of $\sigma_s$} As expected, we observe that using $\sigma_{s}=0$ in both PMRF (\Cref{algorithm:PMRF}) and the flow from $Y$ method (\Cref{algorithm:naive-flow}) leads to blurry results with small MSE and large FID.
Thus, for a fair comparison, we use the same value of $\sigma_{s}>0$ in both methods.
We use $\sigma_{s}=0.025$ for all the ImageNet restoration tasks and for the face image denoising task.
For the rest of the face restoration tasks (inpainting, colorization, and super-resolution), we use $\sigma_{s}=0.1$\footnote{Note that the ``optimal'' value of $\sigma_{s}$ depends on the severity of the restoration task.
For example, in a mild image denoising task, the posterior mean $\smash{\hat{X}^{*}}$ may already be close to the ground-truth image, so $\sigma_{s}$ should be smaller compared to a case where the noise is severe.}.

\subsubsection{Vector field}
\paragraph{Face restoration.}Similarly to~\Cref{section:implementation}, the vector field is a HDiT model. The time $t$ in~\Cref{algorithm:PMRF,algorithm:naive-flow,algorithm:posterior-conditioned-on-xstar,algorithm:posterior-conditioned-on-y} is sampled from $U[0,1]$ using a stratified sampling strategy.
For all baseline methods and PMRF, we train the vector field for 1000 epochs, use a fixed batch size of 256, adopt the AdamW optimizer with a learning rate of $5\times10^{-4}$, $(\beta_{1},\beta_{2})=(0.9,0.95)$, and a weight decay of $10^{-2}$.
As in~\citep{hdit}, we do not apply learning rate scheduling.
Finally, we use the EMA weights for evaluation, using a decay of $0.9999$.
A summary of the hyper-parameters is provided in~\Cref{tab:training-hyperparams}.
\paragraph{ImageNet restoration.}\label{par:gcr_vector}
The vector field remains the exact same HDiT model as in the face restoration experiments.
Here, the model is trained for 100 epochs, and the rest of the hyper-parameters (optimization, EMA for evaluation, \etc) remain the same as before.

\subsubsection{Posterior mean predictor}\label{appendix:post-mean-abla}
\paragraph{Face restoration.}The posterior mean predictor $f_{\omega}$ is a 4.4M parameters SwinIR model\footnote{We use the official code for the SwinIR architecture from~\url{https://github.com/JingyunLiang/SwinIR}. Implementation details and hyper-parameters are provided in our code.} which we train from scratch for each task. In all tasks, this model is trained for 1000 epochs, with a fixed batch size of 256, using the AdamW optimizer with a learning rate of $5\times10^{-4}$, $(\beta_{1},\beta_{2})=(0.9,0.95)$, without weight decay, and without learning rate scheduling.
When utilizing this model in the flow process (\eg, in PMRF), we use the EMA weights computed with a decay of $0.9999$.

\paragraph{ImageNet restoration.}For the posterior mean predictor $f_{\omega}$, we use the exact same HDiT model as in~\cref{par:gcr_vector}. Namely, for the general-content restoration experiments, the posterior mean predictor and the vector field models are the same.
This model is trained for 100 epochs in all tasks.
The rest of the training hyper-parameters (optimization, EMA, \etc) remain the same as the those of the SwinIR model described above.

\subsubsection{Evaluation}
\paragraph{Face restoration.}We test all models on the CelebA-Test data set, with images of size $256\times256$.
We utilize the \texttt{torch-fidelity} package~\citep{obukhov2020torchfidelity} to compute FID, using the default \texttt{inception-v3-compat} image feature extractor~\citep{inception}.
The FID of each method is computed between the entire FFHQ $256\times256$ training set, and the images produced by the algorithm for the synthesized CelebA-Test degraded images.

\paragraph{ImageNet restoration.}
We test all models on the ImageNet validation data set (50,000 images), with images resized to $128\times128$ pixels.
To obtain these images, we first resize the original images to have a shorter side of 128 pixels, followed by center cropping to the desired size. We again utilize the \texttt{torch-fidelity} package to compute $\text{FD}_{\text{CLIP}}$, which is the Fréchet distance in the latent space of the \texttt{clip-vit-b-32} image feature extractor~\citep{pmlr-v139-radford21a} (using this model instead of \texttt{inception-v3-compat} ensures a better alignment with human opinion scores~\cite{stein2023exposing}).
The $\text{FD}_{\text{CLIP}}$ of each method is computed between the entire ImageNet validation data set (ground-truth images) and the images produced by the algorithm for the corresponding synthesized degraded images.

\subsection{Details of DOT}
\paragraph{Face restoration.}We utilize DOT~\citep{dot} similarly to~\Cref{appendix:dot}, using images of size $256\times256$ instead of $512\times 512$, and adopting the official codes of the authors.
To compute the latent empirical mean and covariance of the target distribution, we provide the first 1000 from FFHQ to the official code of DOT.
For the source distribution, we randomly synthesize degraded images according to the degradation of each task (\Cref{appendix:ablation-degradations}) from the first 1000 images in FFHQ, reconstruct each image using the SwinIR model we trained for each task (the same weights we use in PMRF), and finally compute the empirical mean and covariance of the reconstructions in latent space.
\paragraph{ImageNet restoration.}We select the first image from each class in the ImageNet training data.
To compute the latent empirical mean and covariance of the target distribution, we provide the gathered 1000 images to the official code of DOT.
For the source distribution, we degrade each of the collected 1000 ground-truth images according to the degradation of each task (described in~\cref{appendix:ablation-degradations}), reconstruct the results using the trained ImageNet posterior mean predictor model (described in~\cref{appendix:post-mean-abla}), and finally compute the empirical mean and covariance of the reconstructions in latent space.

\section{Proving that flow from $Y$ is also optimal in~\Cref{example:1}}\label{appendix:flow-from-y-optimal-example1}
In~\Cref{section:ablation} we show that, for the denoising task, PMRF and flow from $Y$ are on-par in terms of both perceptual quality and MSE.
To provide intuition for this result, we show that flow from $Y$ leads to the desired estimator $\smash{\hat{X}_{0}}$ in~\Cref{example:1} (just like PMRF does).

Specifically, as in~\Cref{example:1}, suppose that $X\sim\mathcal{N}(0,1)$, $N\sim\mathcal{N}(0,\sigma_{N}^{2})$, $\sigma_{N}>0$, and $Y=X+N$.
In flow from $Y$ with $\sigma_{s}=0$ we have $Z_{t}=tX+(1-t)Y$, and thus $v_{\text{RF}}(Z_{t},t)=\mathbb{E}[X-Y|Z_{t}]$.
Below, we show that
\begin{align}
    &\text{Cov}(X-Y,Z_{t})=(t-1)\sigma_{N}^{2},~~\mbox{and}\label{eq:cov-naive}\\
    &\text{Var}(Z_{t})=\sigma_{N}^{2}(t^{2}-2t+1)+1.\label{eq:var-naive}
\end{align}
Hence,
\begin{align}
    v_{\text{RF}}(Z_{t},t)&=\mathbb{E}[X-Y|Z_{t}]\nonumber\\
    &=\mathbb{E}[X-Y]+\frac{\text{Cov}(X-Y,Z_{t})}{\text{Var}(Z_{t})}(Z_{t}-\mathbb{E}[Z_{t}])\nonumber\\
    &=\frac{\text{Cov}(X-Y,Z_{t})}{\text{Var}(Z_{t})}Z_{t}\label{eq:exminyzero}\\
    &=\frac{(t-1)\sigma_{N}^{2}}{\sigma_{N}^{2}(t^{2}-2t+1)+1}Z_{t},
\end{align}
where~\Cref{eq:exminyzero} holds since $\mathbb{E}[X-Y]=0$ and $\mathbb{E}[Z_{t}]=0$.
One can verify that the solution of $d\hat{Z}_{t}=v_{\text{RF}}(\hat{Z}_{t},t)dt$ for any initial condition $\hat{Z}_{0}=c$ is given by
\begin{align}
    \hat{Z}_{t}=c\frac{\sqrt{\sigma_{N}^{2}(t^{2}-2t+1)+1}}{\sqrt{1+\sigma_{N}^{2}}}.
\end{align}
Namely, we have
\begin{align}
    \hat{Z}_{1}&=\frac{1}{\sqrt{1+\sigma_{N}^{2}}}Y\nonumber\\
    &=\hat{X}_{0},
\end{align}
where the last equality follows from~\Cref{eq:x0-standard-denoising}.
It follows that flow from $Y$ is also optimal in~\Cref{example:1}, just like PMRF.

Demonstrating~\Cref{eq:cov-naive,eq:var-naive} is straightforward.
We have
\begin{align}
    \text{Cov}(X-Y,Z_{t})&=\text{Cov}(X-Y,tX+(1-t)Y)\nonumber\\
    &=tCov(X,X)+(1-t)\text{Cov}(X,Y)-tCov(X,Y)-(1-t)\text{Cov}(Y,Y)\nonumber\\
    &=t+(1-t)-t-(1-t)(1+\sigma_{N}^{2})\nonumber\\
    &=(t-1)\sigma_{N}^{2},
\end{align}
and
\begin{align}
    \text{Var}(Z_{t})&=\text{Cov}(tX+(1-t)Y,tX+(1-t)Y)\nonumber\\
    &=t^{2}\text{Cov}(X,X)+2t(1-t)\text{Cov}(X,Y)+(1-t)^{2}\text{Cov}(Y,Y)\nonumber\\
    &=t^{2}+2t(1-t)+(1-t)^{2}(1+\sigma_{N}^{2})\nonumber\\
    &=t^{2}+2t-2t^{2}+(1-2t+t^{2})(1+\sigma_{N}^{2})\nonumber\\
    &=t^{2}(1-2+1+\sigma_{N}^{2})+2t(1-1-\sigma_{N}^{2})+1+\sigma_{N}^{2}\nonumber\\
    &=t^{2}\sigma_{N}^{2}-2t\sigma_{N}^{2}+\sigma_{N}^{2}+1\nonumber\\
    &=\sigma_{N}^{2}(t^{2}-2t+1)+1.
\end{align}

\section{Indicator RMSE (IndRMSE) derivation}\label{appendix:proxymse}
The MSE of any estimator $\smash{\hat{X}}$ can always be written as
\begin{align}
    \mathbb{E}[\norm{X-\hat{X}}^{2}]&=\mathbb{E}[\norm{\hat{X}-\hat{X}^{*}}^{2}]+\mathbb{E}[\norm{X-\hat{X}^{*}}^{2}]\label{eq:orth}\\
    &=\mathbb{E}[\norm{\hat{X}-\hat{X}^{*}}^{2}]+m,
\end{align}
where $\smash{\hat{X}^{*}}=\mathbb{E}[X|Y]$ is the MMSE estimator,~\Cref{eq:orth} follows from Lemma 2 in~\citep{dror} (Appendix B.1), and $m$ is some constant that does not depend on $\hat{X}$.
Thus, if $\smash{f(Y)\approx\hat{X}^{*}}$, we have
\begin{align}
     \mathbb{E}[\norm{X-\hat{X}}^{2}]\approx \mathbb{E}[\norm{\hat{X}-f(Y)}^{2}]+m,
\end{align}
so $\sqrt{\mathbb{E}[\norm{\hat{X}-f(Y)}^{2}]}$ may be used as an indicator for $\sqrt{\mathbb{E}[\norm{X-\hat{X}}^{2}]}$.
Future works should investigate the effectiveness of this measure.

\clearpage
\begin{table}[t]
\caption{Varying the number of flow steps $K$ in PMRF (\Cref{algorithm:PMRF}) on the \textbf{CelebA-Test} blind face image restoration benchmark. {\color[HTML]{FF0000} Red}, {\color[HTML]{0042FF} blue} and {\color[HTML]{009901} green} indicate the best, the second best, and the third best scores, respectively.
Increasing the number of steps improves the perceptual quality while hindering the distortion.
These results are expected due to the distortion-perception tradeoff.}
\begin{center}
\begin{tabular}{@{}ccccccccccc@{}}
 &
  \multicolumn{4}{c}{Perceptual Quality} &
   &
  \multicolumn{5}{c}{Distortion} \\ \cmidrule(l){2-11} 
$K$ &
  FID$\downarrow$ &
  KID$\downarrow$ &
  NIQE$\downarrow$ &
  Precision$\uparrow$ &
   &
  PSNR$\uparrow$ &
  SSIM$\uparrow$ &
  LPIPS$\downarrow$ &
  Deg$\downarrow$ &
  LMD$\downarrow$ \\ \midrule
3 &
  81.81 &
  0.0811 &
  8.9012 &
  0.2820 &
   &
  {\color[HTML]{FF0000} 27.668} &
  {\color[HTML]{FF0000} 0.7669} &
  {\color[HTML]{000000} 0.3582} &
  31.41 &
  2.0340 \\
5 &
  63.77 &
  0.0581 &
  7.4568 &
  0.4563 &
   &
  {\color[HTML]{0042FF} 27.498} &
  {\color[HTML]{0042FF} 0.7601} &
  {\color[HTML]{0042FF} 0.3401} &
  {\color[HTML]{009901} 30.80} &
  {\color[HTML]{0042FF} 2.0294} \\
10 &
  44.39 &
  0.0342 &
  5.2648 &
  0.6427 &
   &
  {\color[HTML]{009901} 27.017} &
  {\color[HTML]{009901} 0.7388} &
  {\color[HTML]{FF0000} 0.3314} &
  {\color[HTML]{FF0000} 30.49} &
  {\color[HTML]{FF0000} 2.0215} \\
25 &
  {\color[HTML]{009901} 37.46} &
  {\color[HTML]{009901} 0.0257} &
  {\color[HTML]{009901} 4.1179} &
  {\color[HTML]{FF0000} 0.7073} &
   &
  26.373 &
  0.7073 &
  {\color[HTML]{009901} 0.3470} &
  {\color[HTML]{0042FF} 30.67} &
  {\color[HTML]{009901} 2.0303} \\
50 &
  {\color[HTML]{0042FF} 36.63} &
  {\color[HTML]{0042FF} 0.0244} &
  {\color[HTML]{0042FF} 3.8492} &
  {\color[HTML]{0042FF} 0.7050} &
   &
  26.028 &
  0.6896 &
  0.3591 &
  30.89 &
  2.0409 \\
100 &
  {\color[HTML]{FF0000} 36.57} &
  {\color[HTML]{FF0000} 0.0240} &
  {\color[HTML]{FF0000} 3.7311} &
  {\color[HTML]{009901} 0.7010} &
   &
  25.810 &
  0.6787 &
  0.3662 &
  31.06 &
  2.0409 \\ \bottomrule
\end{tabular}
\end{center}
\label{tab:quantitative-num-steps-blind-celeba}
\end{table}

\begin{table}[t]
\caption{Varying the number of flow steps $K$ in PMRF (\Cref{algorithm:PMRF}) on the \textbf{LFW-Test} blind face image restoration benchmark. {\color[HTML]{FF0000} Red}, {\color[HTML]{0042FF} blue} and {\color[HTML]{009901} green} indicate the best, the second best, and the third best scores, respectively.
Increasing the number of steps generally improves the perceptual quality while hindering the IndRMSE. These results are expected due to the distortion-perception tradeoff.}
\begin{center}
\begin{tabular}{@{}cccccc@{}}
$K$ &
  FID$\downarrow$ &
  KID$\downarrow$ &
  NIQE$\downarrow$ &
  Precision$\uparrow$ &
  IndRMSE$\downarrow$ \\ \midrule
3 & 78.2331 & 0.0692 & 8.2315 & 0.3477 & {\color[HTML]{FF0000} 3.3934} \\ 
5 & 64.3121 & 0.0524 & 6.8733 & 0.5143 & {\color[HTML]{0042FF} 3.8008} \\ 
10 & 51.9845 & 0.0387 & 4.9896 & 0.6546 & {\color[HTML]{009901} 4.8648} \\ 
25 & {\color[HTML]{FF0000} 49.3151} & {\color[HTML]{FF0000} 0.0366} & {\color[HTML]{009901} 4.0028} & {\color[HTML]{009901} 0.6692} & 6.1382 \\ 
50 & {\color[HTML]{0042FF} 49.5581} & {\color[HTML]{0042FF} 0.0375} & {\color[HTML]{0042FF} 3.7126} & {\color[HTML]{FF0000} 0.6826} & 6.7960 \\ 
100 & {\color[HTML]{009901} 49.6561} & {\color[HTML]{009901} 0.0377} & {\color[HTML]{FF0000} 3.6242} & {\color[HTML]{0042FF} 0.6710} & 7.2004 \\ 

    \bottomrule
\end{tabular}
\end{center}
\label{tab:quantitative-num-steps-blind-lfw}
\end{table}

\begin{table}[t]
\caption{Varying the number of flow steps $K$ in PMRF (\Cref{algorithm:PMRF}) on the \textbf{WIDER-Test} blind face image restoration benchmark. {\color[HTML]{FF0000} Red}, {\color[HTML]{0042FF} blue} and {\color[HTML]{009901} green} indicate the best, the second best, and the third best scores, respectively.
Increasing the number of steps generally improves the perceptual quality while hindering the IndRMSE. These results are expected due to the distortion-perception tradeoff.}
\begin{center}
\begin{tabular}{@{}cccccc@{}}
$K$ &
  FID$\downarrow$ &
  KID$\downarrow$ &
  NIQE$\downarrow$ &
  Precision$\uparrow$ &
  IndRMSE$\downarrow$ \\ \midrule
3 & 85.0361 & 0.0704 & 9.9988 & 0.2742 & {\color[HTML]{FF0000} 5.3486} \\ 
5 & 65.2563 & 0.0451 & 8.4650 & 0.5381 & {\color[HTML]{0042FF} 5.7665} \\ 
10 & {\color[HTML]{009901} 42.5002} & {\color[HTML]{009901} 0.0179} & 5.5677 & {\color[HTML]{FF0000} 0.7144} & {\color[HTML]{009901} 7.1134} \\ 
25 & {\color[HTML]{FF0000} 41.2685} & {\color[HTML]{FF0000} 0.0160} & {\color[HTML]{009901} 4.0726} & {\color[HTML]{FF0000} 0.7144} & 9.2164 \\ 
50 & {\color[HTML]{0042FF} 41.4446} & {\color[HTML]{0042FF} 0.0174} & {\color[HTML]{0042FF} 3.6953} & {\color[HTML]{0042FF} 0.6845} & 10.3403 \\ 
100 & 42.9437 & 0.0183 & {\color[HTML]{FF0000} 3.5704} & {\color[HTML]{009901} 0.6907} & 11.0674 \\ 

    \bottomrule
\end{tabular}
\end{center}
\label{tab:quantitative-num-steps-blind-wider}
\end{table}

\begin{table}[t]
\caption{Varying the number of flow steps $K$ in PMRF (\Cref{algorithm:PMRF}) on the \textbf{WebPhoto-Test} blind face image restoration benchmark. {\color[HTML]{FF0000} Red}, {\color[HTML]{0042FF} blue} and {\color[HTML]{009901} green} indicate the best, the second best, and the third best scores, respectively.
Increasing the number of steps generally improves the perceptual quality while hindering the IndRMSE. These results are expected due to the distortion-perception tradeoff.}
\begin{center}
\begin{tabular}{@{}cccccc@{}}
$K$ &
  FID$\downarrow$ &
  KID$\downarrow$ &
  NIQE$\downarrow$ &
  Precision$\uparrow$ &
  IndRMSE$\downarrow$ \\ \midrule
3 & 128.7858 & 0.0996 & 9.1626 & 0.3907 & {\color[HTML]{FF0000} 3.2961} \\ 
5 & 113.4734 & 0.0782 & 7.5893 & 0.5553 & {\color[HTML]{0042FF} 3.7371} \\ 
10 & 91.3677 & 0.0484 & 5.4199 & {\color[HTML]{0042FF} 0.6413} & {\color[HTML]{009901} 4.8369} \\ 
25 & {\color[HTML]{009901} 81.0642} & {\color[HTML]{009901} 0.0347} & {\color[HTML]{009901} 4.2402} & {\color[HTML]{FF0000} 0.6462} & 6.3098 \\ 
50 & {\color[HTML]{FF0000} 78.7174} & {\color[HTML]{0042FF} 0.0324} & {\color[HTML]{0042FF} 3.9512} & {\color[HTML]{009901} 0.6265} & 7.0159 \\ 
100 & {\color[HTML]{0042FF} 79.1239} & {\color[HTML]{FF0000} 0.0313} & {\color[HTML]{FF0000} 3.7990} & 0.5602 & 7.6887 \\ 

    \bottomrule
\end{tabular}
\end{center}
\label{tab:quantitative-num-steps-blind-webphoto}
\end{table}

\begin{table}[t]
\caption{Varying the number of flow steps $K$ in PMRF (\Cref{algorithm:PMRF}) on the \textbf{CelebAdult-Test} blind face image restoration benchmark. {\color[HTML]{FF0000} Red}, {\color[HTML]{0042FF} blue} and {\color[HTML]{009901} green} indicate the best, the second best, and the third best scores, respectively.
Increasing the number of steps generally improves the perceptual quality while hindering the IndRMSE. These results are expected due to the distortion-perception tradeoff.}
\begin{center}
\begin{tabular}{@{}cccccc@{}}
$K$ &
  FID$\downarrow$ &
  KID$\downarrow$ &
  NIQE$\downarrow$ &
  Precision$\uparrow$ &
  IndRMSE$\downarrow$ \\ \midrule
3 & 122.8780 & 0.0551 & 6.6818 & 0.3944 & {\color[HTML]{FF0000} 3.7339} \\ 
5 & 113.7837 & 0.0426 & 5.5810 & 0.4444 & {\color[HTML]{0042FF} 4.3313} \\ 
10 & 105.7426 & 0.0319 & 4.4119 & {\color[HTML]{0042FF} 0.6111} & {\color[HTML]{009901} 5.4908} \\ 
25 & {\color[HTML]{009901} 102.8914} & {\color[HTML]{009901} 0.0293} & {\color[HTML]{009901} 3.7367} & 0.5500 & 6.7145 \\ 
50 & {\color[HTML]{0042FF} 102.1454} & {\color[HTML]{FF0000} 0.0276} & {\color[HTML]{0042FF} 3.5609} & {\color[HTML]{FF0000} 0.6278} & 7.3004 \\ 
100 & {\color[HTML]{FF0000} 102.0568} & {\color[HTML]{0042FF} 0.0279} & {\color[HTML]{FF0000} 3.4878} & {\color[HTML]{009901} 0.5944} & 7.7286 \\ 

    \bottomrule
\end{tabular}
\end{center}
\label{tab:quantitative-num-steps-blind-celebadult}
\end{table}
\begin{table}[t]
\caption{Quantitative evaluation of blind face restoration algorithms on the \textbf{LFW-Test} data set.}
\begin{center}
\begin{tabular}{@{}cccccc@{}}
\toprule
\textbf{Method} &
  FID$\downarrow$ &
  KID$\downarrow$ &
  NIQE$\downarrow$ &
  Precision$\uparrow$ &
  IndRMSE$\downarrow$ \\ \midrule
  \textcolor{gray}{SwinIR ($\approx$ Posterior mean)} &
  \textcolor{gray}{87.34} &
  \textcolor{gray}{0.0808} &
  \textcolor{gray}{8.595}  &
  \textcolor{gray}{0.2513}  &
  \textcolor{gray}{0}  \\ \midrule
DOT &
  97.09 &
  0.0891 &
  5.705 &
  0.1806 &
  26.24 \\
RestoreFormer++ &
  50.80 &
  0.0386 &
  {\color[HTML]{009901} 3.911} &
  0.6330 &
  9.429 \\
RestoreFormer &
  49.04 &
  0.0355 &
  4.168 &
  0.6674 &
  12.21 \\
CodeFormer &
  52.82 &
  0.0387 &
  4.484 &
  {\color[HTML]{009901} 0.6756} &
  9.534 \\
VQFRv1 &
  51.31 &
  0.0399 &
  {\color[HTML]{FF0000} 3.590} &
  0.6014 &
  11.26 \\
VQFRv2 &
  51.16 &
  0.0378 &
  {\color[HTML]{0042FF} 3.761} &
  0.6154 &
  16.15 \\
GFPGAN &
  {\color[HTML]{009901} 47.59} &
  {\color[HTML]{0042FF} 0.0308} &
  4.554 &
  0.6400 &
  9.842 \\ \midrule
DiffBIR &
  {\color[HTML]{FF0000} 40.97} &
  {\color[HTML]{FF0000} 0.0234} &
  5.738 &
  0.5804 &
  {\color[HTML]{009901} 9.105} \\
DifFace &
  {\color[HTML]{0042FF} 46.48} &
  {\color[HTML]{009901} 0.0329} &
  {\color[HTML]{000000} 4.024} &
  {\color[HTML]{FF0000} 0.7411} &
  11.33 \\
BFRffusion &
  50.93 &
  0.0377 &
  4.963 &
  {\color[HTML]{0042FF} 0.6850} &
  {\color[HTML]{0042FF} 7.210} \\ \midrule
\textbf{PMRF (Ours)} &
  49.32 &
  0.0366 &
  {\color[HTML]{000000} 4.003} &
  {\color[HTML]{000000} 0.6692} &
  {\color[HTML]{FF0000} 6.138} \\ \bottomrule
\end{tabular}
\end{center}
\label{tab:lfw_quantitative}
\end{table}
\begin{table}[t]
\caption{Quantitative evaluation of blind face restoration algorithms on the \textbf{WIDER-Test} data set.}
\begin{center}
\begin{tabular}{@{}cccccc@{}}
\toprule
\textbf{Method} & FID$\downarrow$              & KID$\downarrow$               & NIQE$\downarrow$ & Precision$\uparrow$           & IndRMSE$\downarrow$          \\ \midrule

  \textcolor{gray}{SwinIR ($\approx$ Posterior mean)} &
  \textcolor{gray}{91.96} &
  \textcolor{gray}{0.0780} &
  \textcolor{gray}{10.16}  &
  \textcolor{gray}{0.1649}  &
  \textcolor{gray}{0}  \\
  \midrule
DOT                 & 82.15                        & 0.0618 & 7.633                        & 0.4082                        & 14.900                        \\
RestoreFormer++     & 45.41                        & 0.0209 & {\color[HTML]{0042FF} 3.759} & 0.6505                        & 14.466                        \\
RestoreFormer       & 50.23                        & 0.0251 & {\color[HTML]{009901} 3.894} & 0.6505                        & 14.200                        \\
CodeFormer      & 39.27                        & {\color[HTML]{009901} 0.0138} & 4.164            & {\color[HTML]{009901} 0.7227} & 12.185                        \\
VQFRv1              & 44.21                        & 0.0192 & {\color[HTML]{FF0000} 3.055} & 0.5959                        & 17.042                        \\
VQFRv2              & {\color[HTML]{009901} 38.70} & 0.0157 & 3.995                        & 0.6381                        & 16.368                        \\
GFPGAN              & 41.28                        & 0.0182 & 4.450                        & {\color[HTML]{FF0000} 0.7876} & 11.840                        \\\midrule
DiffBIR         & {\color[HTML]{FF0000} 35.87} & {\color[HTML]{FF0000} 0.0114} & 5.659            & 0.6361                        & {\color[HTML]{009901} 11.106} \\
DifFace         & {\color[HTML]{0042FF} 37.38} & {\color[HTML]{0042FF} 0.0131} & 4.383            & {\color[HTML]{0042FF} 0.7856} & {\color[HTML]{0042FF} 10.418} \\
BFRffusion          & 56.82                        & 0.0307 & 4.647                        & 0.5825                        & 11.759                        \\ \midrule
\textbf{PMRF (Ours)} & 41.27                        & 0.0160 & 4.073                        & 0.7144                        & {\color[HTML]{FF0000} 9.2164} \\ \bottomrule
\end{tabular}
\end{center}
\label{tab:wider_quantitative}
\end{table}
\begin{table}[t]
\caption{Quantitative evaluation of blind face restoration algorithms on the \textbf{WebPhoto-Test} data set.}
\begin{center}
\begin{tabular}{@{}cccccc@{}}
\toprule
\textbf{Method}     & FID$\downarrow$              & KID$\downarrow$               & NIQE$\downarrow$             & Precision$\uparrow$ & IndRMSE$\downarrow$         \\ \midrule
  \textcolor{gray}{SwinIR ($\approx$ Posterior mean)} &
  \textcolor{gray}{132.1} &
  \textcolor{gray}{0.1022} &
  \textcolor{gray}{9.638}  &
  \textcolor{gray}{0.2383}  &
  \textcolor{gray}{0}  \\ \midrule
DOT           & 125.6                        & 0.0865                        & 7.397 & 0.3071                        & 20.69                        \\
RestoreFormer++     & {\color[HTML]{0042FF} 75.60} & {\color[HTML]{FF0000} 0.0291} & {\color[HTML]{0042FF} 4.080} & 0.6143              & 18.43                        \\
RestoreFormer & {\color[HTML]{009901} 77.80} & {\color[HTML]{009901} 0.0334} & 4.460 & 0.6265                        & 11.55                        \\
CodeFormer    & 84.17                        & 0.0406                        & 4.709 & {\color[HTML]{0042FF} 0.6830} & {\color[HTML]{009901} 8.952} \\
VQFRv1              & {\color[HTML]{FF0000} 75.57} & {\color[HTML]{0042FF} 0.0312} & {\color[HTML]{FF0000} 3.608} & 0.5774              & 12.53                        \\
VQFRv2        & 83.52                        & 0.0411                        & 4.620 & 0.5848                        & 14.48                        \\
GFPGAN        & 88.43                        & 0.0494                        & 4.941 & {\color[HTML]{009901} 0.6781} & 9.240                        \\ \midrule
DiffBIR       & 92.82                        & 0.0541                        & 6.069 & 0.5307                        & 9.152                        \\
DifFace       & 80.05                        & 0.0341                        & 4.405 & {\color[HTML]{FF0000} 0.7273} & 10.31                        \\
BFRffusion    & 84.83                        & 0.0388                        & 5.612 & 0.5872                        & {\color[HTML]{0042FF} 7.222} \\ \midrule
\textbf{PMRF (Ours)} & 81.06                        & 0.0347                        & {\color[HTML]{009901} 4.240} & 0.6462              & {\color[HTML]{FF0000} 6.310} \\ \bottomrule
\end{tabular}
\end{center}
\label{tab:webphoto_quantitative}
\end{table}

\begin{table}[t]
\caption{Quantitative evaluation of blind face restoration algorithms on the \textbf{CelebAdult-Test} data set.}
\begin{center}
\begin{tabular}{@{}cccccc@{}}
\toprule
\textbf{Method} &
  FID$\downarrow$ &
  KID$\downarrow$ &
  NIQE$\downarrow$ &
  Precision$\uparrow$ &
  IndRMSE$\downarrow$ \\ \midrule
\textcolor{gray}{SwinIR ($\approx$ Posterior mean)} &
  \textcolor{gray}{143.80} &
  \textcolor{gray}{0.0811} &
  \textcolor{gray}{7.477}  &
  \textcolor{gray}{0.4222}  &
  \textcolor{gray}{0}  \\
  \midrule
DOT             & 208.54 & 0.1634 & 6.018 & 0.0444                        & 44.24                        \\
RestoreFormer++ & 103.81 & 0.0313 & 4.006 & 0.5167                        & 11.43                        \\
RestoreFormer   & 103.96 & 0.0315 & 4.320 & 0.5556                        & 14.97                        \\
CodeFormer      & 111.62 & 0.0427 & 4.544 & 0.5722                        & 10.49                        \\
VQFRv1 &
  105.59 &
  0.0336 &
  {\color[HTML]{0042FF} 3.756} &
  {\color[HTML]{009901} 0.5944} &
  11.14 \\
VQFRv2          & 104.72 & 0.0337 & 3.999 & {\color[HTML]{0042FF} 0.6056} & 18.51                        \\
GFPGAN          & 109.19 & 0.0395 & 4.423 & 0.5111                        & 11.90                        \\ \midrule
DiffBIR         & 109.74 & 0.0411 & 5.650 & 0.5000                        & {\color[HTML]{009901} 9.853} \\
DifFace &
  {\color[HTML]{FF0000} 98.780} &
  {\color[HTML]{FF0000} 0.0243} &
  {\color[HTML]{009901} 3.901} &
  {\color[HTML]{FF0000} 0.6833} &
  12.66 \\
BFRffusion &
  {\color[HTML]{009901} 103.06} &
  {\color[HTML]{0042FF} 0.0290} &
  4.702 &
  {\color[HTML]{0042FF} 0.6056} &
  {\color[HTML]{0042FF} 8.037} \\ \midrule
\textbf{PMRF (Ours)} &
  {\color[HTML]{0042FF} 102.89} &
  {\color[HTML]{009901} 0.0293} &
  {\color[HTML]{FF0000} 3.737} &
  0.5500 &
  {\color[HTML]{FF0000} 6.715} \\ \bottomrule
\end{tabular}
\end{center}
\label{tab:quantitative_celebadult}
\end{table}
\begin{figure}[t]
    \centering
    \begin{subfigure}[b]{1\textwidth}
  
    \includegraphics[width=1\linewidth]{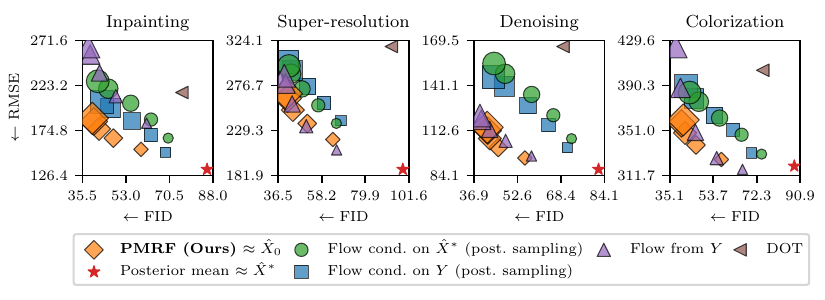}
    \caption{Face image restoration.}
    \label{fig:f1}
  \end{subfigure}
  
  \begin{subfigure}[b]{1\textwidth}
    \includegraphics[width=1\linewidth]{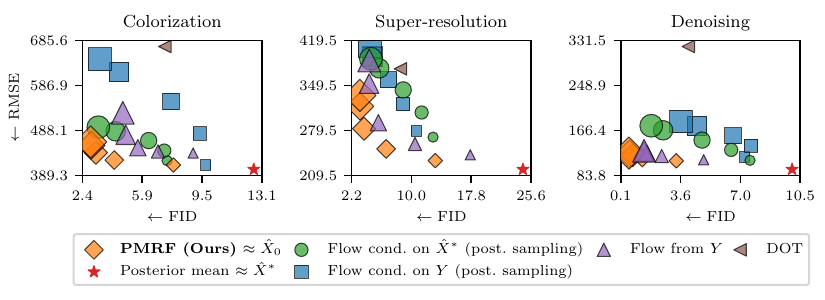}
        \caption{ImageNet restoration.}

    \label{fig:f2}
  \end{subfigure}

        \caption{A controlled experiment comparing PMRF with previous methodologies, where we vary the number of steps $K$ in each algorithm (\Cref{algorithm:naive-flow,algorithm:PMRF,algorithm:posterior-conditioned-on-xstar,algorithm:posterior-conditioned-on-y}).
        Specifically, we use $\smash{K\in\{5, 10, 20, 50, 100\}}$, where a larger marker size corresponds to a larger value of $K$.
        See~\Cref{section:ablation} for more details.}
    \label{fig:fig_ours_vs_posterior_num_steps_ablation}
\end{figure}
\begin{table}[t]
\caption{A comparison of the forward process and training loss of PMRF and the baseline methods from~\Cref{section:ablation}.
For the flow from $Y$ algorithm, we have $Y^{\dagger}=Y$ for all tasks besides super-resolution. For the super-resolution task, we up-scale $Y$ using nearest-neighbor interpolation.}
\begin{center}
\begin{tabular}{@{}lll@{}}
 &
  \textbf{Forward process} &
  \textbf{Flow training loss} \\ \midrule
\multicolumn{1}{l|}{\multirow{3}{*}{PMRF (Ours)}} &
  \multicolumn{1}{l|}{$Z_{t}=tX+(1-t)Z_{0}$} &
  \multirow{3}{*}{$\min_{\theta}\int_{0}^{1}\mathbb{E}\left[\norm{(X-Z_{0})-v_{\theta}(Z_{t},t)}^{2}\right]dt$} \\
\multicolumn{1}{l|}{} &
  \multicolumn{1}{l|}{$Z_{0}=f_{\omega^{*}}(Y)+\sigma_{s}\epsilon$} &
   \\
\multicolumn{1}{l|}{} &
  \multicolumn{1}{l|}{$\epsilon\sim\mathcal{N}(0,I)$} &
   \\ \midrule
\multicolumn{1}{l|}{\multirow{2}{*}{Flow cond. on $Y$}} &
  \multicolumn{1}{l|}{$Z_{t}=tX+(1-t)Z_{0}$} &
  \multirow{2}{*}{$\min_{\theta}\int_{0}^{1}\mathbb{E}\left[\norm{(X-Z_0)-v_{\theta}(Z_{t},t,Y)}^{2}\right]dt$} \\
\multicolumn{1}{l|}{} &
  \multicolumn{1}{l|}{$Z_{0}\sim\mathcal{N}(0,I)$} &
   \\ \midrule
\multicolumn{1}{l|}{\multirow{2}{*}{Flow cond. on $\hat{X}^{*}$}} &
  \multicolumn{1}{l|}{$Z_{t}=tX+(1-t)Z_{0}$} &
  \multirow{2}{*}{$\min_{\theta}\int_{0}^{1}\mathbb{E}\left[\norm{(X-Z_0)-v_{\theta}(Z_{t},t,f_{\omega^{*}}(Y))}^{2}\right]dt$} \\
\multicolumn{1}{l|}{} &
  \multicolumn{1}{l|}{$Z_{0}\sim\mathcal{N}(0,I)$} &
   \\ \midrule
\multicolumn{1}{l|}{\multirow{3}{*}{Flow from $Y$}} &
  \multicolumn{1}{l|}{$Z_{t}=tX+(1-t)Z_{0}$} &
  \multirow{3}{*}{$\min_{\theta}\int_{0}^{1}\mathbb{E}\left[\norm{(X-Z_{0})-v_{\theta}(Z_{t},t)}^{2}\right]dt$} \\
\multicolumn{1}{l|}{} &
  \multicolumn{1}{l|}{$Z_{0}=Y^{\dagger}+\sigma_{s}\epsilon$} &
   \\
\multicolumn{1}{l|}{} &
  \multicolumn{1}{l|}{$\epsilon\sim\mathcal{N}(0,I)$} &
   \\ \bottomrule
\end{tabular}
\end{center}
\label{tab:ablation-model-comparison}
\end{table}

\clearpage

\begin{algorithm2e*}
\SetKwInput{KwInput}{Inputs}
\DontPrintSemicolon

  \SetKwProg{Fn}{Training}{}{}
    \Fn{}{Solve $\theta^{*}\gets\argmin_{\theta}\mathbb{E}\left[\norm{\left(X-Z_{0}\right)-v_{\theta}(Z_{t},t,Y)}^{2}\right]$\tcp*{$Z_{t}\coloneqq tX+(1-t)Z_{0},\, Z_{0}\sim\mathcal{N}(0,I)$. $t$ is sampled uniformly from $U[0,1]$.}

  }
  
  \SetKwProg{Fn}{Inference (using Euler's method with $K$ steps to solve the ODE)}{}{}
    \Fn{}{
    Initialize $\hat{x}\sim\mathcal{N}(0,I)$
    
    \For{$i\gets 0,\hdots,K-1$}{
          $\hat{x}\gets \hat{x}+\frac{1}{K}v_{\theta^{*}}(\hat{x},\frac{i}{K},y)$\tcp*{$y$ is the given degraded measurement}
          }
    Return $\hat{x}$

  }
\caption{Flow conditioned on $Y$}
\label{algorithm:posterior-conditioned-on-y}
\end{algorithm2e*}

\begin{algorithm2e*}
\SetKwInput{KwInput}{Inputs}
\DontPrintSemicolon

  \SetKwProg{Fn}{Training}{}{}
    \Fn{}{
    \textit{Stage 1:} Solve $\omega^{*} \gets \arg\min_{\omega} \mathbb{E}\left[\norm{X-f_{\omega}(Y)}^{2}\right]$

    \textit{Stage 2:}  Solve $\theta^{*}\gets\argmin_{\theta}\mathbb{E}\left[\norm{\left(X-Z_{0}\right)-v_{\theta}(Z_{t},t,f_{\omega^{*}}(Y))}^{2}\right]$\tcp*{$Z_{t}\coloneqq tX+(1-t)Z_{0},\,Z_{0}\sim\mathcal{N}(0,I)$. $t$ is sampled uniformly from $U[0,1]$.}

  }
  
  \SetKwProg{Fn}{Inference (using Euler's method with $K$ steps to solve the ODE)}{}{}
    \Fn{}{
    Initialize $\hat{x}\sim\mathcal{N}(0,I)$
    
    \For{$i\gets 0,\hdots,K-1$}{
          $\hat{x}\gets \hat{x}+\frac{1}{K}v_{\theta^{*}}(\hat{x},\frac{i}{K},f_{\omega^{*}}(y))$\tcp*{$y$ is the given degraded measurement}
          }
    Return $\hat{x}$

  }
\caption{Flow conditioned on $\hat{X}^{*}$}
\label{algorithm:posterior-conditioned-on-xstar}
\end{algorithm2e*}

\begin{algorithm2e*}
\SetKwInput{KwInput}{Inputs}
\DontPrintSemicolon

  \SetKwProg{Fn}{Training}{}{}
    \Fn{}{

    Solve $\theta^{*}\gets\argmin_{\theta}\mathbb{E}\left[\norm{\left(X-Z_{0}\right)-v_{\theta}(Z_{t},t)}^{2}\right]$\tcp*{$Z_{t}\coloneqq tX+(1-t)Z_{0}$, $Z_{0}=Y^{\dagger}+\sigma_{s}\epsilon$, $\epsilon\sim\mathcal{N}(0,I)$, and $Y^{\dagger}$ is the up-scaled version of $Y$ that matches the dimensionality of $X$. $t$ is sampled uniformly from $U[0,1]$.}

  }
  
  \SetKwProg{Fn}{Inference (using Euler's method with $K$ steps to solve the ODE)}{}{}
    \Fn{}{
    Initialize $\hat{x}\sim\mathcal{N}(y^{\dagger},I\sigma_{s}^{2})$\tcp*{$y^{\dagger}$ is the up-scaled version of the degraded measurement $y$}
    \For{$i\gets 0,\hdots,K-1$}{
          $\hat{x}\gets \hat{x}+\frac{1}{K}v_{\theta^{*}}(\hat{x},\frac{i}{K})$
          }
    Return $\hat{x}$

  }
\caption{Flow from $Y$}
\label{algorithm:naive-flow}
\end{algorithm2e*}

\begin{table}
\renewcommand{\arraystretch}{1.3} 

\caption{HDiT architecture~\citep{hdit} details and training hyper-parameters.
}
\begin{center}
\begin{tabular}{@{}ccc@{}}
\toprule
\textbf{Hyper-parameter}                                & \textbf{\begin{tabular}{c}Blind face restoration\\(\Cref{section:blind-face-restoration})\end{tabular}} & \textbf{\begin{tabular}{c}Controlled experiments\\ (\Cref{section:ablation})\end{tabular}} \\ \midrule
\multicolumn{1}{c|}{Parameters}               & 160M               & 121M            \\
\hline
\multicolumn{1}{c|}{GFLOPs / forward}               &    100.67            & \begin{tabular}{c}44.83 (face restoration)\\  11.21 (ImageNet restoration)\end{tabular}           \\
\hline
\multicolumn{1}{c|}{Memory consumption}               &    612MB          &  464MB         \\
\hline
\multicolumn{1}{c|}{Training epochs}          & 3850               & \begin{tabular}{c}1000 (face restoration)\\ 100 (ImageNet restoration)\end{tabular}            \\
\hline
\multicolumn{1}{c|}{Batch size}               & 256                & 256             \\
\hline
\multicolumn{1}{c|}{Image size}               & 512$\times$512                & \begin{tabular}{c}256$\times$256 (face restoration)\\ 128$\times$128 (ImageNet restoration)\end{tabular}           \\
\hline
\multicolumn{1}{c|}{Precision}                & bfloat16 mixed     & bfloat16 mixed  \\
\hline
\multicolumn{1}{c|}{Training hardware}                  & 16 A100 40GB                    & 4 L40 48GB                         \\
\hline
\multicolumn{1}{c|}{Training time}            & 12 days            & 2.5 days          \\
\hline
\multicolumn{1}{c|}{Patch size}               & 4                  & 4               \\
\hline
\multicolumn{1}{c|}{\begin{tabular}{c}Levels\\(local + global attention)\end{tabular}}  & 2 + 1                           & 1 + 1                                     \\
\hline

\multicolumn{1}{c|}{Depth}                    & (2,2,8)        & (2,11)      \\
\hline
\multicolumn{1}{c|}{Widths}                   & (256,512,1024) & (384,768)   \\
\hline
\multicolumn{1}{c|}{\begin{tabular}{c}Attention heads\\(width / head dim.)\end{tabular}} & (4, 8, 16)                  & (6,12)                                \\
\hline

\multicolumn{1}{c|}{Attention head dim.}       & 64                 & 64              \\
\hline
\multicolumn{1}{c|}{\begin{tabular}{c}Neighborhood\\kernel size\end{tabular}} & 7                  & 7               \\
\hline
\multicolumn{1}{c|}{Mapping depth}            & 1                  & 1               \\
\hline
\multicolumn{1}{c|}{Mapping width}            & 768                & 768             \\
\hline
\multicolumn{1}{c|}{Optimizer}                & AdamW              & AdamW           \\
\hline

\multicolumn{1}{c|}{Learning rate}            & $5\cdot 10^{-4}$               & $5\cdot 10^{-4}$              \\
\hline

\multicolumn{1}{c|}{\begin{tabular}{c}Learning rate\\scheduler\end{tabular}}            & \begin{tabular}{c}Multi-step\\last 350 epochs\end{tabular}           & Not applied            \\
\hline

\multicolumn{1}{c|}{AdamW betas}                    & (0.9, 0.95)    & (0.9, 0.95) \\
\hline

\multicolumn{1}{c|}{AdamW eps.}                      & $10^{-8}$               & $10^{-8}$            \\
\hline

\multicolumn{1}{c|}{Weight decay}             & $10^{-2}$                 & $10^{-2}$              \\
\hline

\multicolumn{1}{c|}{EMA decay}                & 0.9999             & 0.9999          \\ \bottomrule
\end{tabular}
\end{center}
\label{tab:training-hyperparams}
\end{table}
\clearpage
\begin{figure}
    \centering
    \includegraphics[width=1\linewidth]{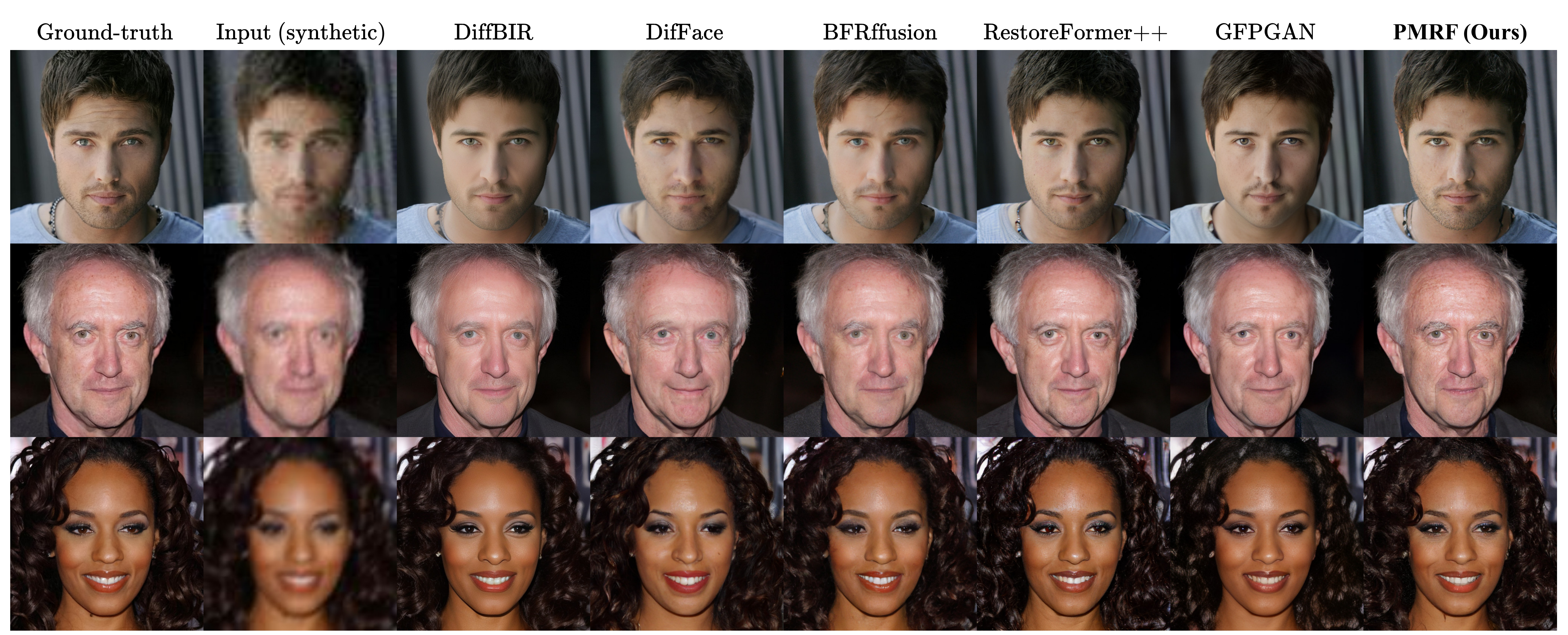}
    \caption{Comparison with state-of-the-art blind face restoration methods on inputs from the \textbf{CelebA-Test} data set. Our method produces high perceptual quality while achieving lower distortion overall. \textbf{Zoom in for best view}.}
    \label{fig:celeba_qualitative}
\end{figure}
\begin{figure}
    \centering
    \includegraphics[width=1\linewidth]{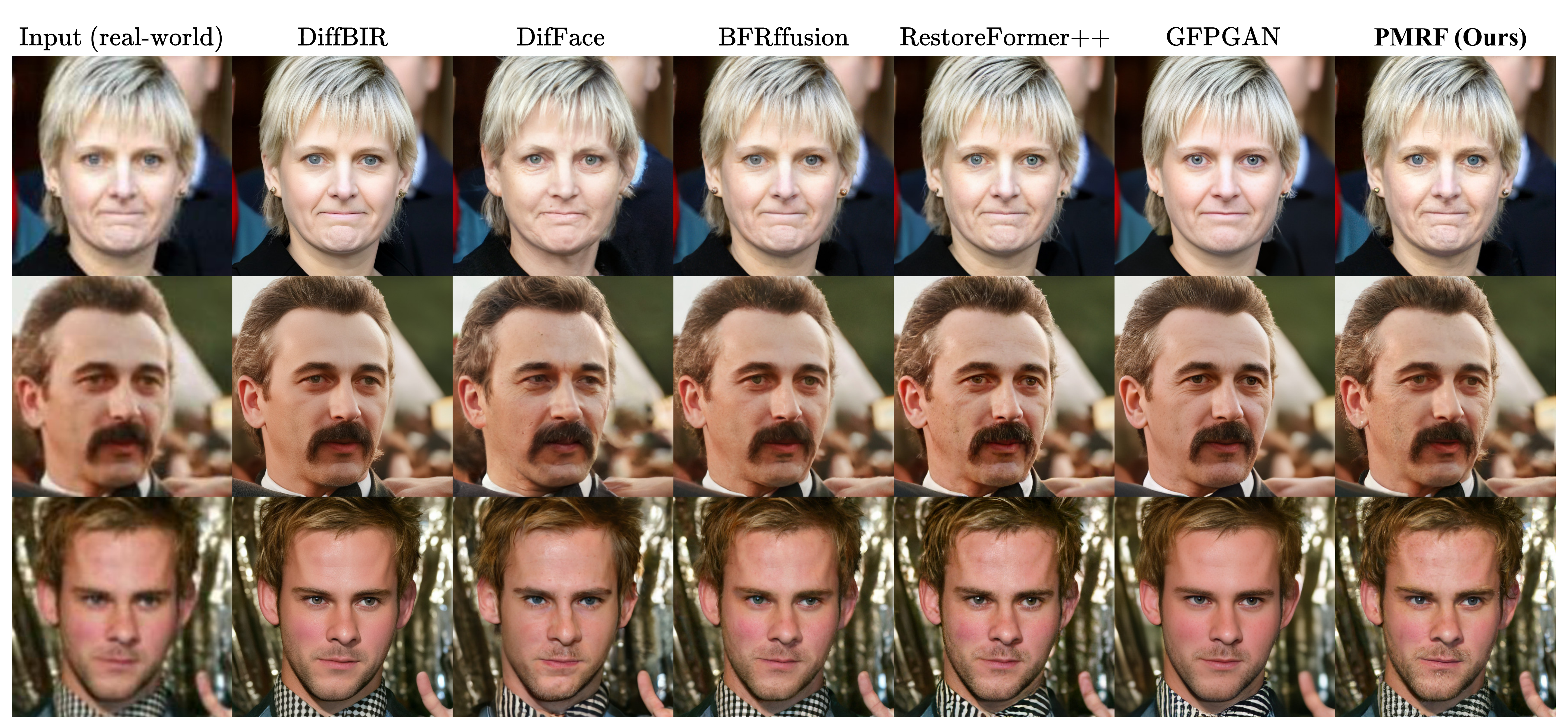}
    \caption{Qualitative results on the real-world \textbf{LFW-Test} data set. Our algorithm produces reconstructions with either better or on-par perceptual quality compared to the state-of-the-art, while maintaining very high consistency with the input measurements. \textbf{Zoom in for best view}.}
    \label{fig:qualitative-fig1}
\end{figure}
\begin{figure}
    \centering
    \includegraphics[width=1\linewidth]{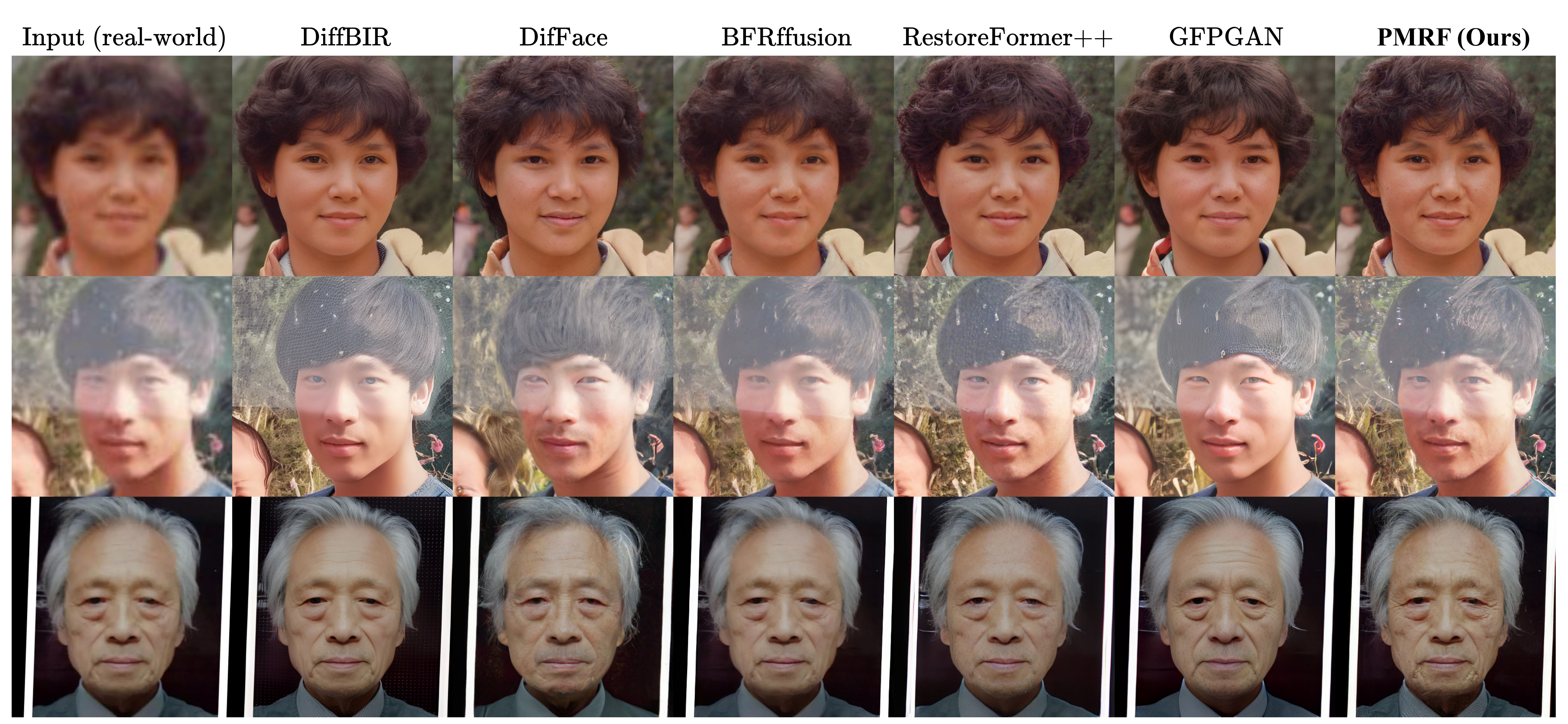}
    \caption{Qualitative results on the real-world \textbf{WebPhoto-Test} data set. Our algorithm produces reconstructions with either better or on-par perceptual quality compared to the state-of-the-art, while maintaining very high consistency with the input measurements. \textbf{Zoom in for best view}.}
    \label{fig:webphoto_qualitative}
\end{figure}
\begin{figure}
    \centering
    \includegraphics[width=1\linewidth]{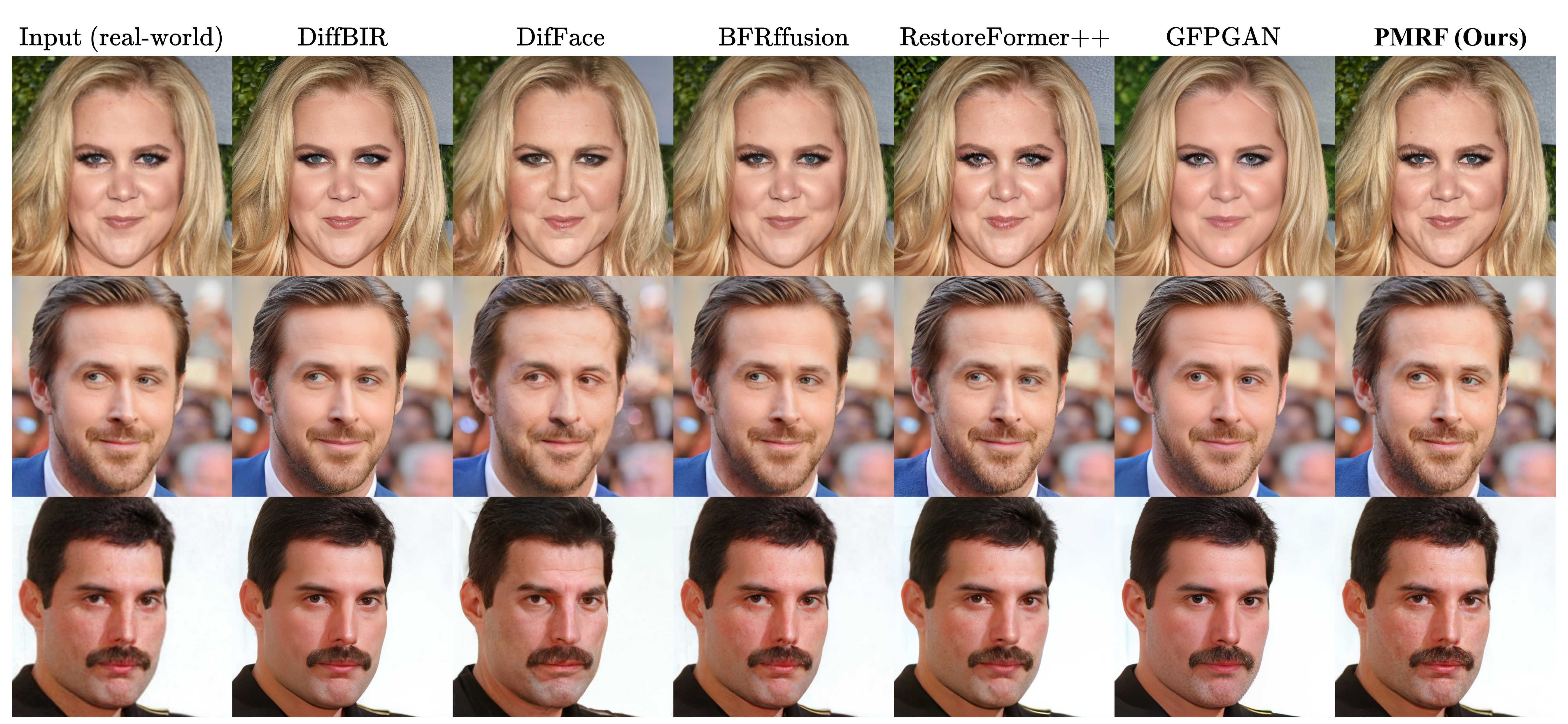}
    \caption{Qualitative results on the real-world \textbf{CelebAdult-Test} data set. Our algorithm produces reconstructions with either better or on-par perceptual quality compared to the state-of-the-art, while maintaining very high consistency with the input measurements. \textbf{Zoom in for best view}.}
    \label{fig:celebadult_qualitative}
\end{figure}
\begin{figure}
    \centering
\includegraphics[width=1\linewidth,height=20cm,keepaspectratio]{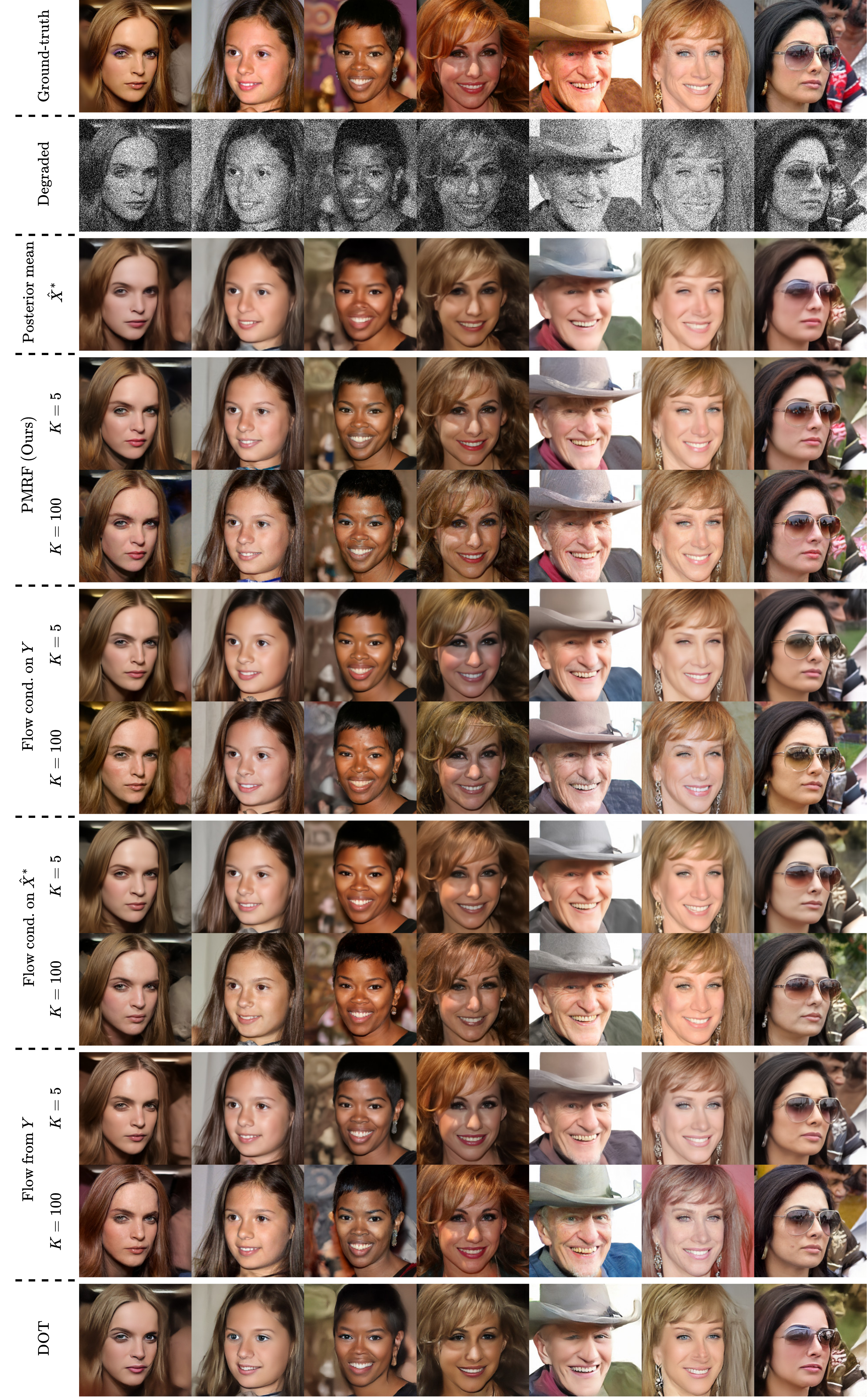}
    \caption{Visual results on the face image \textbf{colorization} task from~\Cref{section:ablation}. Our method outperforms all baselines for any number of inference steps $K$. \textbf{Zoom in for best view}.}
    \label{fig:ablation-colorization-qualitative}
\end{figure}
\begin{figure}
    \centering
\includegraphics[width=1\linewidth,height=20cm,keepaspectratio]{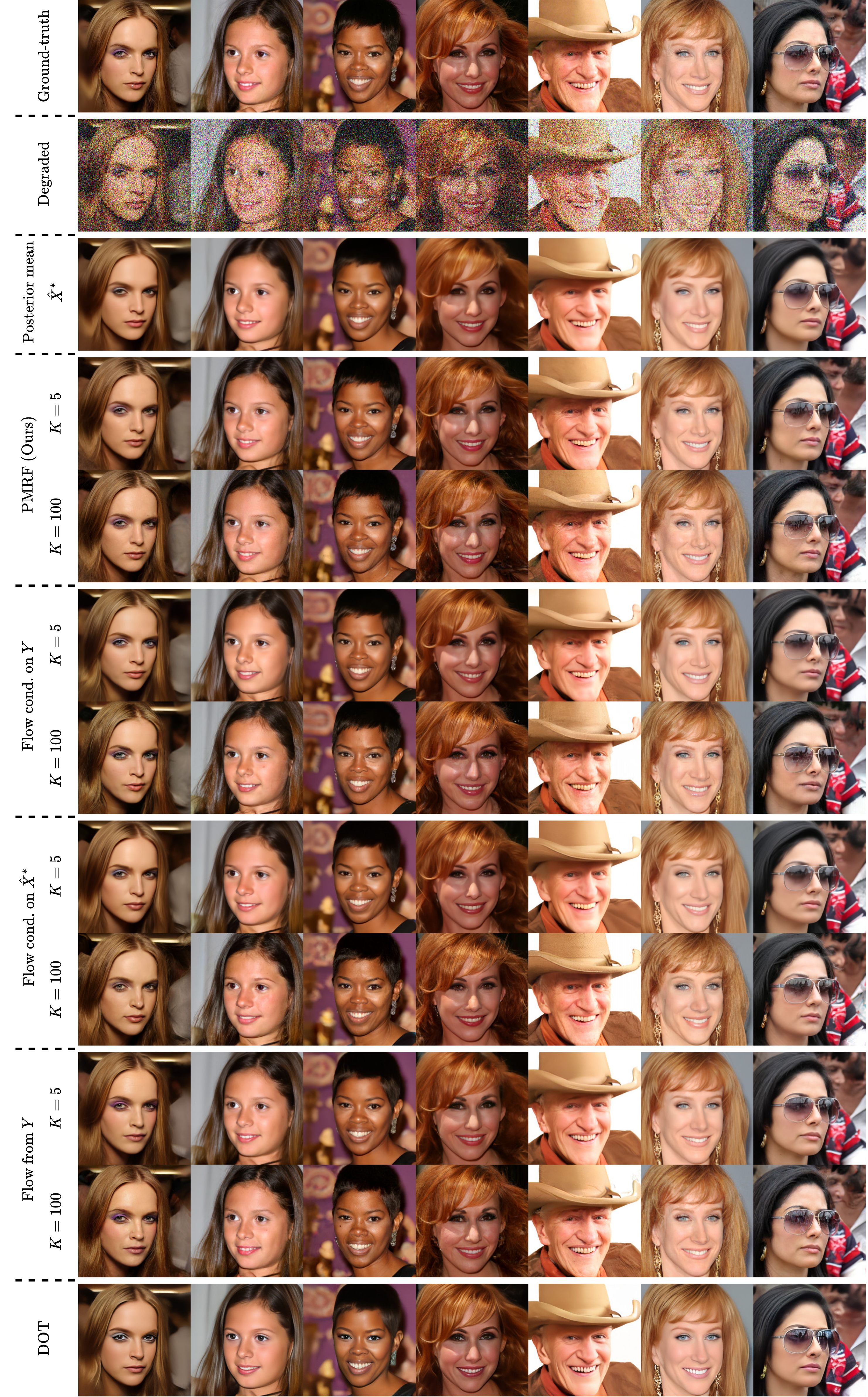}
    \caption{Visual results on the face image \textbf{denoising} task from~\Cref{section:ablation}. Our method is on-par with flow from $Y$, and outperforms the posterior sampling methods for any number of inference steps $K$. \textbf{Zoom in for best view}.}
    \label{fig:ablation-denoising-qualitative}
\end{figure}
\begin{figure}
    \centering
\includegraphics[width=1\linewidth,height=20cm,keepaspectratio]{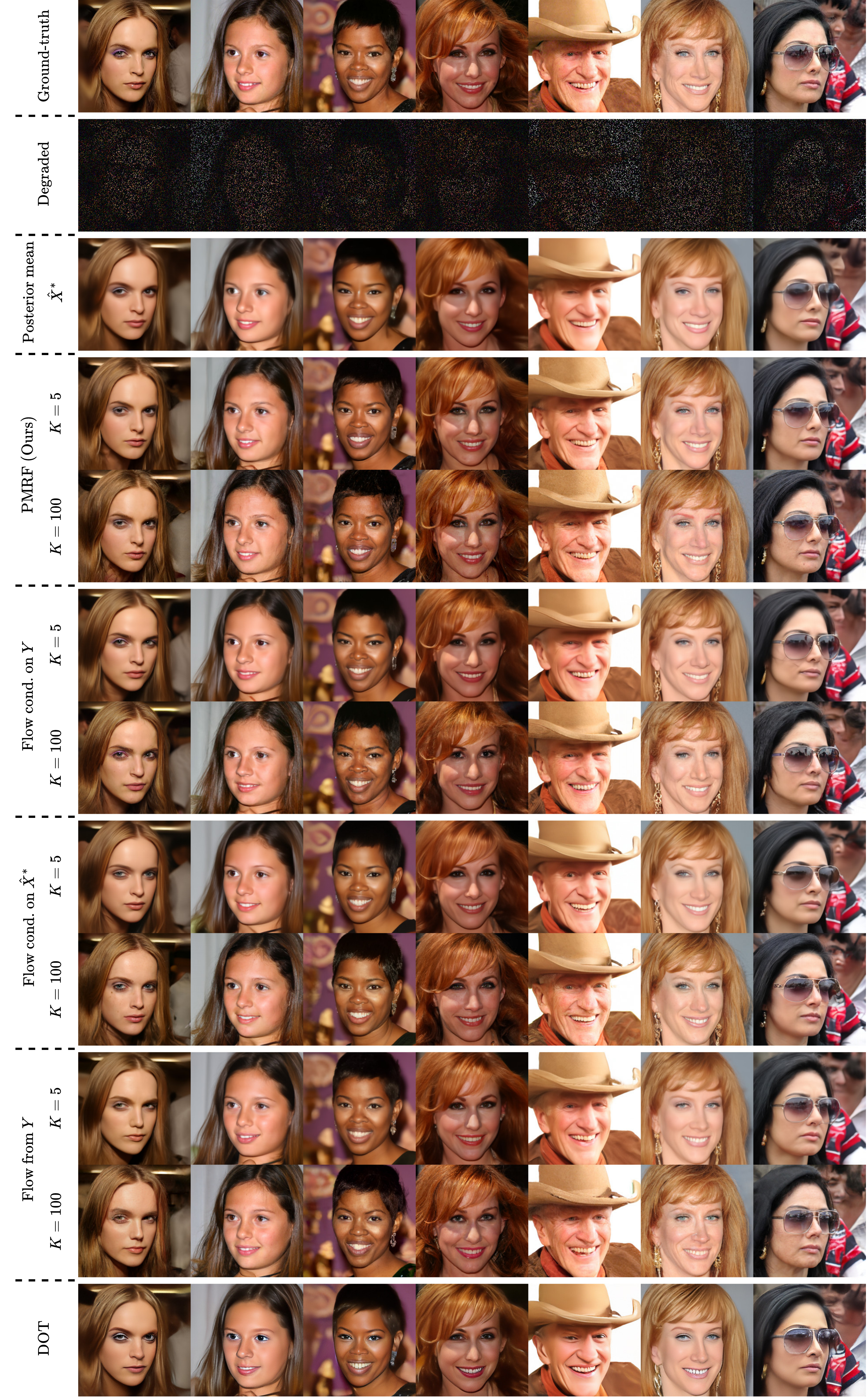}
    \caption{Visual results on the face image \textbf{inpainting} task from~\Cref{section:ablation}. Our method outperforms all baselines for any number of inference steps $K$. \textbf{Zoom in for best view}.}
    \label{fig:ablation-inpainting-qualitative}
\end{figure}
\begin{figure}
    \centering
\includegraphics[width=1\linewidth,height=20cm,keepaspectratio]{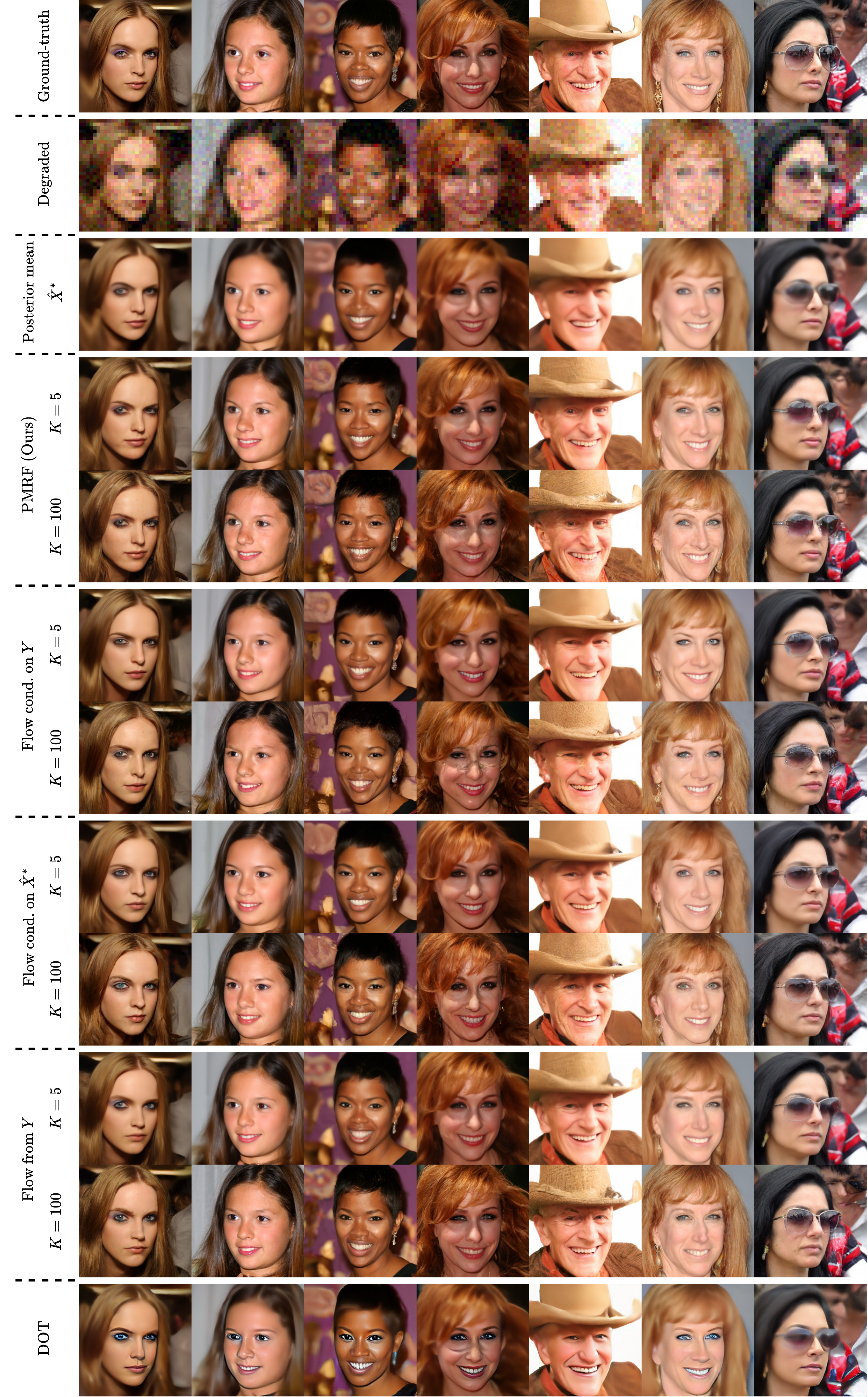}
    \caption{Visual results from~\Cref{section:ablation} on the face image \textbf{super-resolution} task. Our method is on-par with flow from $Y$, and outperforms the posterior sampling methods for any number of inference steps $K$. \textbf{Zoom in for best view}.}
    \label{fig:ablation-sr-qualitative}
\end{figure}
\begin{figure}
    \centering
\includegraphics[width=1\linewidth,height=20cm,keepaspectratio]{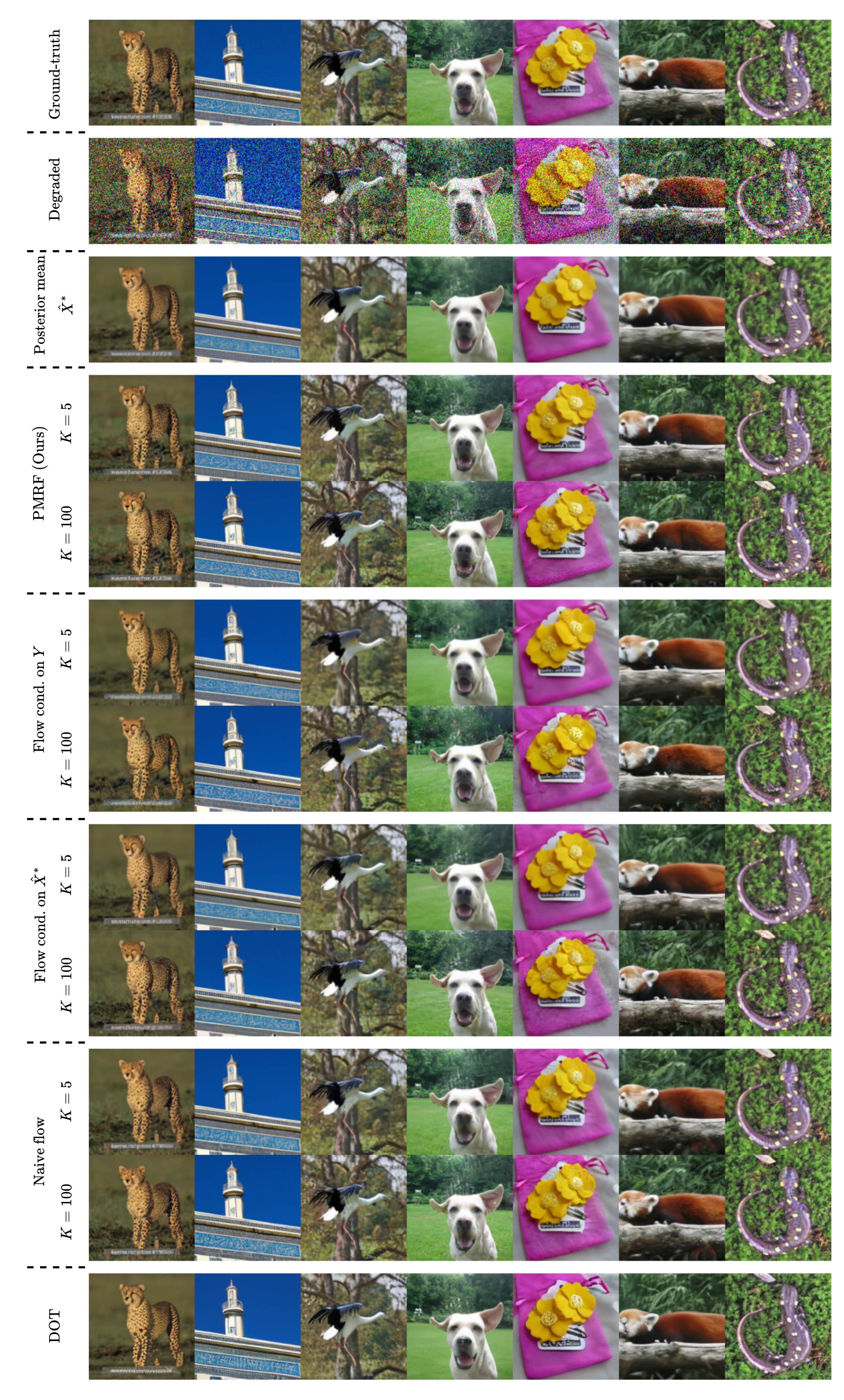}
    \caption{Visual results from~\Cref{section:ablation} on the ImageNet \textbf{denoising} task. Our method outperforms all baselines for any number of inference steps $K$. \textbf{Zoom in for best view}.}
    \label{fig:ablation-denoising-qualitative-imagenet}
\end{figure}
\begin{figure}
    \centering
\includegraphics[width=1\linewidth,height=20cm,keepaspectratio]{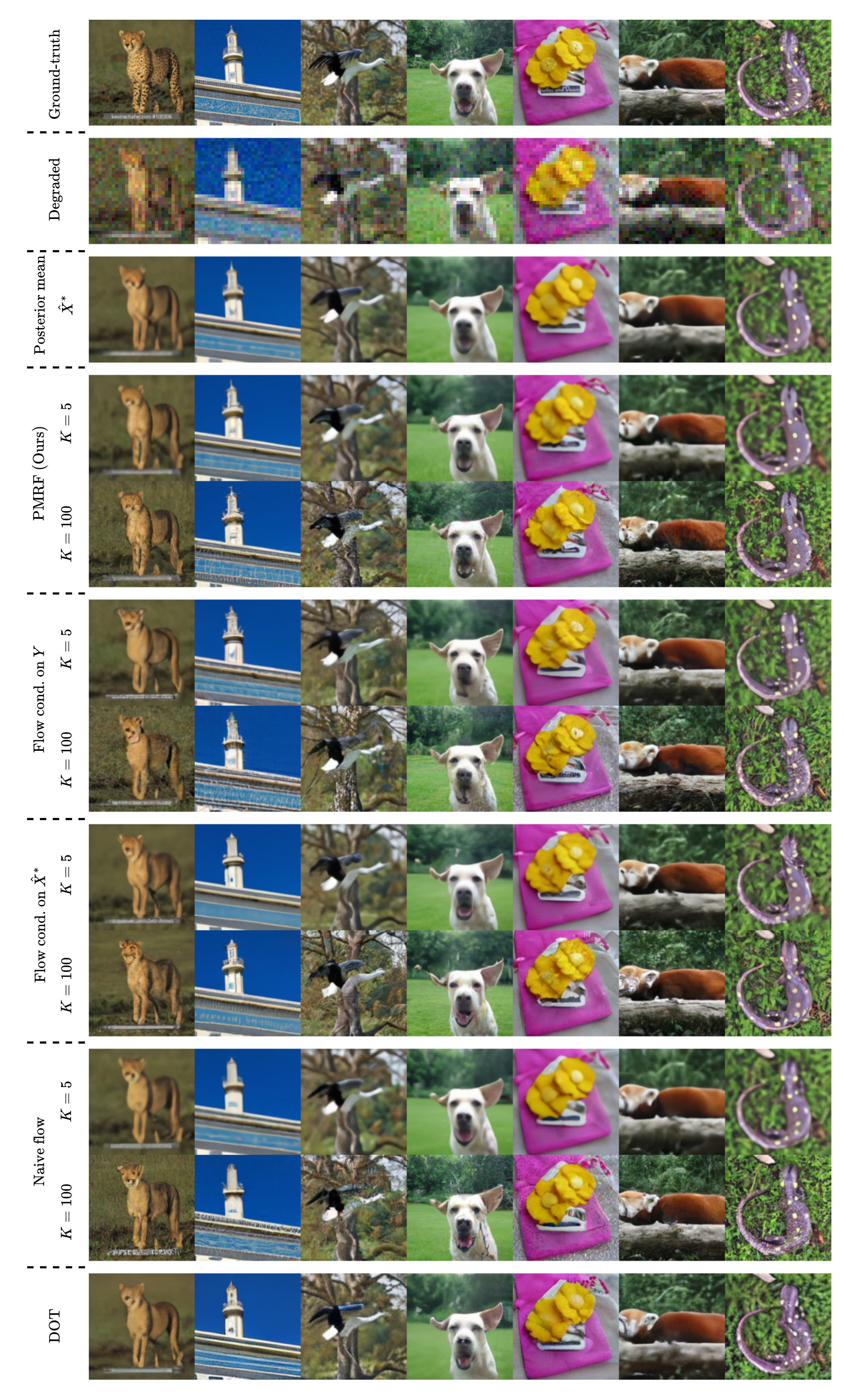}
    \caption{Visual results from~\Cref{section:ablation} on the ImageNet \textbf{super-resolution} task. Our method outperforms all baselines for any number of inference steps $K$. \textbf{Zoom in for best view}.}
    \label{fig:ablation-sr-qualitative-imagenet}
\end{figure}
\begin{figure}
    \centering
\includegraphics[width=1\linewidth,height=20cm,keepaspectratio]{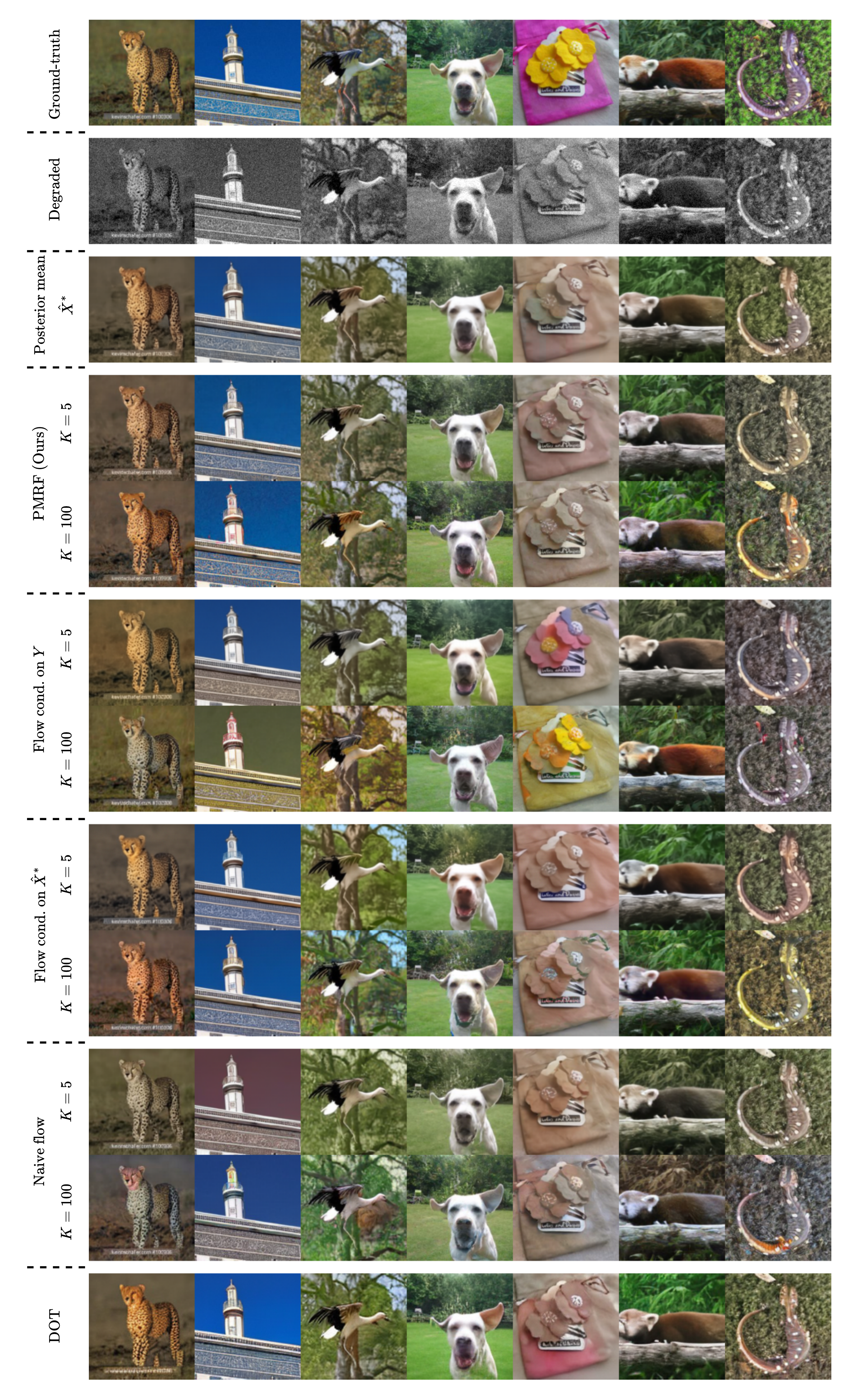}
    \caption{Visual results from~\Cref{section:ablation} on the ImageNet \textbf{colorization} task. Our method outperforms all baselines for any number of inference steps $K$. \textbf{Zoom in for best view}.}
    \label{fig:ablation-colorization-qualitative-imagenet}
\end{figure}

\end{document}